\newtheorem{mytheorem}{Theorem}
\newtheorem{mylemma}{Lemma}
\newtheorem{mycorollary}{Corollary}
\newlength{\originalitemsep}
\newenvironment{pseudocode} 
  {\begin{enumerate}[1]%
   \setlength{\itemindent}{1.5ex}%
   \setlength{\originalitemsep}{\itemsep}\setlength{\itemsep}{-\parsep}\normalsize \sf}
  {\setlength{\itemsep}{\originalitemsep}\end{enumerate}}
\newlength{\pseudocodeindent}
\newcommand{\pcindent}[1]{\rule{#1\pseudocodeindent}{0mm}}
\newcommand{\pccomment}[1]{\hfill $\triangleright$ \textsl{#1}}
\newcommand{\bdd}{\textrm{bdd-$d$-set}}
\newcommand{\cs}{\textrm{commitment structure}}
\newcommand{\BDD}{\textrm{\sc Bounded-Degree Deletion}}
\newcommand{\J}{J}
\newcommand{\findextremal}{\textbf{find\_extremal}}
\newcommand{\computeCD}{\textbf{compute\_AB}}
\newcommand{\alg}[4]{%
  \hrule
  \vspace{1ex}
  \normalsize
  \textbf{Algorithm:} #1\\
  \textbf{Input:} #2\\
  \textbf{Output:} #3
  #4
  \hrule
}
\newcommand{\proc}[4]{%
  \hrule
  \vspace{1ex}
  \normalsize
  \textbf{Procedure:} #1\\
  \textbf{Input:} #2\\
  \textbf{Output:} #3
  #4
  \hrule
}
\begin{document}
\title[A Generalization of Nemhauser and Trotter's Local Optimization
Theorem]{A Generalization of Nemhauser and Trotter's\\ Local Optimization
Theorem}
\author[pcru]{M.R. Fellows}{Michael R. Fellows}
\address[pcru]{%
 PC Research Unit, Office of DVC (Research),
 \newline University of Newcastle, Callaghan, NSW~2308, Australia.}
\email{michael.fellows@newcastle.edu.au}
\author[jena]{J. Guo}{Jiong Guo}
\author[jena]{H. Moser}{Hannes Moser}
\author[jena]{R. Niedermeier}{Rolf Niedermeier}
\address[jena]{%
Institut f\"ur Informatik, Friedrich-Schiller-Universit\"at Jena,
\newline Ernst-Abbe-Platz 2, D-07743 Jena, Germany.}
\email{{(guo,moser,niedermr)@minet.uni-jena.de}}
\thanks{%
  The first author was supported by the Australian Research Council. Work
  done while staying in Jena as a recipient of the Humboldt
  Research Award of the Alexander von Humboldt Foundation,
  Bonn, Germany.
  The second author was supported by the DFG, Emmy Noether research group PIAF, NI~369/4, and project DARE, GU~1023/1.
  The third author was supported by the DFG, projects ITKO, NI~369/5, and AREG, NI~369/9.}
\keywords{Algorithms, computational complexity, NP-hard problems, W[2]-completeness, graph problems, combinatorial optimization, fixed-parameter tractability, kernelization}
\subjclass{F.2.2, G.2.1, G.2.2, G.2.3.}
%
%
\begin{abstract}
The Nemhauser-Trotter local optimization theorem applies to the
NP-hard \textsc{Vertex Cover} problem and has applications 
in approximation as well as parameterized algorithmics.
We present a framework that generalizes Nemhauser and Trotter's result
to vertex deletion and graph packing problems, introducing novel
algorithmic strategies based on purely combinatorial arguments 
(not referring to linear programming as the Nemhauser-Trotter result
originally did).

We exhibit our framework using a generalization of \textsc{Vertex Cover},
called 
\BDD{}, that has promise to become an important tool in the analysis 
of gene and other biological networks.
For some fixed~$d\geq 0$, \BDD{} asks to delete as few
vertices as possible from a graph in order to transform it into a graph
with maximum vertex degree at most~$d$.  \textsc{Vertex Cover} is the special case of
$d=0$.  Our generalization of the Nemhauser-Trotter theorem
implies that \BDD{} has a problem kernel with a
linear number of vertices
for every constant~$d$.  
We also outline an application
of our extremal combinatorial approach to the problem of packing stars with a bounded number
of leaves.
Finally, charting the border between (parameterized) tractability and
intractability for \BDD{}, we provide a W[2]-hardness result for 
\BDD{} in case of unbounded $d$-values.
%
\end{abstract}

\maketitle

\section{Introduction}
Nemhauser and Trotter~\cite{NT75} proved a famous theorem in combinatorial optimization.
In terms of the NP-hard \textsc{Vertex Cover}\footnote{\textsc{Vertex
Cover} is the following problem:
Given an undirected graph, find a minimum-cardinality set~$S$ of vertices such that
each edge has at least one endpoint in~$S$.}
problem, it can be formulated as follows:

\vspace{1em}\noindent \textbf{NT-Theorem~\cite{NT75,BE85}}.
\emph{
  For an undirected graph~$G=(V,E)$ one can compute in polynomial time two
  disjoint vertex subsets~$A$ and~$B$, such that the following three properties hold:
  \begin{enumerate}
    \setlength{\itemsep}{-\parsep}
    \item If~$S'$ is a vertex cover of the induced subgraph~$G[V \setminus (A \cup B)]$,
          then~$A \cup S'$ is a vertex cover of~$G$.
    \item There is a minimum-cardinality vertex cover~$S$ of~$G$ with~$A
          \subseteq S$.
    \item Every vertex cover of the induced subgraph~$G[V \setminus (A \cup B)]$ has 
          size at least~$|V \setminus (A \cup B)|/2$.
  \end{enumerate}}
In other words, the NT-Theorem provides a polynomial-time data reduction
for \textsc{Vertex Cover}.  That is, for vertices in~$A$ it can already be
decided in polynomial time to put them into the solution set and vertices
in~$B$ can be ignored for finding a solution. 
The NT-Theorem is very useful for approximating
\textsc{Vertex Cover}. The point is that the search for an approximate
solution can be restricted to the induced subgraph~$G[V \setminus (A
\cup B)]$. The NT-Theorem directly delivers a factor-$2$
approximation for \textsc{Vertex Cover} by choosing~$V \setminus B$ as the
vertex cover. Chen et al.~\cite{CKJ01} first observed that the NT-Theorem
directly yields a $2k$-vertex problem kernel for \textsc{Vertex Cover},
where the parameter~$k$ denotes the size of the solution set. Indeed,
this is in a sense an ``ultimate'' kernelization result in parameterized
complexity analysis~\cite{DF99,FG06,Nie06} because there is good reason
to believe that there is a matching lower bound~$2k$ for the kernel
size unless~P$=$NP~\cite{KR08}.

Since its publication numerous authors have referred to the
importance of the NT-Theorem from the viewpoint of polynomial-time
approximation algorithms (e.g.,~\cite{BE85,Khu02})
as well as from the viewpoint of parameterized algorithmics 
(e.g.,~\cite{AFLS07,CKJ01,CC08}). 
The relevance of
the NT-Theorem comes from both its practical usefulness in solving
the {\sc Vertex Cover} problem as well as its
theoretical depth having led to numerous further studies and follow-up
work~\cite{AFLS07,BE85,CC08}.
In this work, our main contribution is to provide a more general and
more widely
applicable version of the NT-Theorem.  The corresponding
algorithmic strategies and proof techniques, however, are not achieved
by a generalization of known proofs of the NT-Theorem but are completely
different and are based on extremal combinatorial arguments.
{\sc Vertex Cover} can be formulated as the problem 
of finding 
a
minimum-cardinality set of vertices whose deletion makes a graph
edge-free, that is, the remaining vertices have degree~$0$.  Our main
result is to prove a generalization of the NT-Theorem that helps in
finding a minimum-cardinality set of vertices whose deletion leaves a
graph of maximum degree~$d$ for arbitrary but fixed~$d$.  Clearly, $d=0$ is the
special case of {\sc Vertex Cover}.

\paragraph{\textbf{Motivation.}}
Since the NP-hard \BDD{} problem---given a graph and two positive
integers~$k$ and~$d$, find at most~$k$ vertices whose deletion leaves
a graph of maximum vertex degree~$d$---stands in the center of our
considerations, some more explanations about its relevance follow.
\BDD{} (or its dual problem) already appears in some theoretical
work, e.g.,~\cite{BF01,KHMN07,NRT05}, but so far it has received
considerably less attention than \textsc{Vertex Cover}, one of the best
studied problems in combinatorial optimization~\cite{Khu02}.  To advocate
and justify more research on \BDD{}, we describe an application
in computational biology. In the analysis of genetic networks based
on micro-array data, recently a clique-centric approach has shown great
success~\cite{BCK+05,CLS+05}.  Roughly speaking, finding
cliques or near-cliques (called paracliques~\cite{CLS+05}) has been
a central tool.  Since finding cliques is computationally hard (also with
respect to approximation), Chesler et al.~\cite[page~241]{CLS+05}
state that ``cliques are identified through a transformation to
the complementary dual \textsc{Vertex Cover} problem and the use
of highly parallel algorithms based on the notion of fixed-parameter
tractability.'' More specifically, in these \textsc{Vertex Cover}-based
algorithms polynomial-time data reduction (such as the NT-Theorem) plays
a decisive role~\cite{Lan08pers} (also see~\cite{AFLS07}) for efficient
solvability of the given real-world data. However, since biological
and other real-world data typically contain errors, the demand for
finding cliques (that is, fully connected subgraphs) often seems
overly restrictive and somewhat relaxed notations of cliques are more
appropriate. For instance, Chesler et al.~\cite{CLS+05} introduced
paracliques, which are achieved by greedily extending the found cliques
by vertices that are connected to almost all (para)clique vertices.
An elegant mathematical concept of ``relaxed cliques'' is that
of $s$-plexes%
\footnote{Introduced in~1978 by Seidman and Foster~\cite{SF78} in the
context of social network analysis. Recently, this
concept has again found increased interest~\cite{BBHS06,KHMN07}.}
where one demands that each $s$-plex vertex does not need to be
connected to all other vertices in the $s$-plex but to all but~$s-1$.
Thus, cliques are $1$-plexes.  The corresponding problem to find
maximum-cardinality $s$-plexes in a graph is basically as computationally
hard as clique detection is~\cite{BBHS06,KHMN07}.
However, as \textsc{Vertex Cover} is the dual problem for clique
detection, \BDD{} is the dual problem for $s$-plex detection: An
$n$-vertex graph has an $s$-plex of size~$k$ iff its complement
graph has a solution set for \BDD{} with~$d=s-1$ of size~$n-k$, and
the solution sets can directly be computed from each other.  
The \textsc{Vertex
Cover} polynomial-time data reduction algorithm has played 
an important role in the practical success
story of analyzing real-world genetic and other biological networks~\cite{BCK+05,CLS+05}. 
Our new polynomial-time data reduction algorithms for \BDD{} have the potential
to play a similar role.

\paragraph{\textbf{Our results.}} 
Our main theorem can be formulated as follows.

\vspace{1ex}\noindent \textbf{BDD-DR-Theorem (\autoref{thm:bddkernelmain})}.
\emph{
For an undirected $n$-vertex and $m$-edge graph~$G=(V,E)$, we can compute two disjoint 
vertex subsets~$A$ and~$B$ in~$O(n^{5/2} \cdot m + n^3)$ time, such that the following three 
properties hold: 
\begin{enumerate}
    \setlength{\itemsep}{-\parsep}
\item If~$S'$ is a solution set for \BDD{} of the induced subgraph~$G[V \setminus (A \cup B)]$, then~$S:=S'\cup A$ is a solution set
for \BDD{} of~$G$. 
\item There is a minimum-cardinality solution set~$S$ for \BDD{} of~$G$ with~$A \subseteq S$.
\item Every solution set for \BDD{} of the induced subgraph~$G[V \setminus (A \cup B)]$ has size at least
      $$\frac{|V\setminus (A \cup B)|}{d^3 + 4d^2+6d+4}
      .$$ 
\end{enumerate} 
}
In terms of parameterized algorithmics, this gives
a~$(d^3 + 4d^2 + 6d + 4) \cdot k$-vertex problem kernel for \BDD{}, which is
linear in~$k$ for constant $d$-values, thus joining a number of
other recent ``linear kernelization results''~\cite{BP08,Guo08,GN07ICALP,KPXS09}.
%
Our general result specializes to a $4k$-vertex problem kernel for
{\sc Vertex Cover} (the NT-Theorem provides a size-$2k$ problem kernel),
but applies to a larger class of problems.  For instance, a slightly
modified version of the BDD-DR-Theorem (with essentially the same proof)
yields a $15k$-vertex problem kernel for the problem of packing at
least~$k$ vertex-disjoint length-$2$ paths of an input graph, giving
the same bound as shown in work focussing on this problem~\cite{PS06}.%
\footnote{Very recently, Wang et
al.~\cite{WNFC08} improved the $15k$-bound to a $7k$-bound.
We claim that our kernelization based on the BDD-DR-Theorem method can
be easily adapted to also deliver the $7k$-bound.}
For the problem, where, given an undirected graph, one seeks a set of
at least~$k$ vertex-disjoint stars\footnote{A \emph{star} is a tree where all of the 
vertices but one are leaves.} of the same constant size, 
we show that a kernel with a linear
number of vertices can
be achieved, improving the best previous quadratic kernelization~\cite{PS06}.
We emphasize that our data reduction technique is based on extremal
combinatorial arguments; the resulting combinatorial kernelization
algorithm has practical potential and implementation work is underway.
Note that for~$d=0$ our algorithm computes the same type of structure
as in the ``crown decomposition'' kernelization for \textsc{Vertex Cover} 
(see, for example,~\cite{AFLS07}). However, for~$d \geq 1$ the structure returned
by our algorithm is much more complicated;
in particular, unlike for {\sc Vertex Cover} crown decompositions, in the BDD-DR-Theorem the set~$A$ is not necessarily a
separator and the set~$B$ does not necessarily form an independent set.

Exploring the borders of parameterized tractability of \BDD{}
for arbitrary values of the degree value~$d$, we show the following.
\begin{mytheorem}\label{thm:w2hard}
  For unbounded~$d$ (given as part of the input), \BDD{} is~$W[2]$-complete
  with respect to the parameter~$k$ denoting 
  the number of vertices to delete.
\end{mytheorem}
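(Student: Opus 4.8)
The statement asserts both membership in $W[2]$ and $W[2]$-hardness, so I would prove the two directions separately, with $d$ always treated as part of the input.

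For membership I would give a parameterized reduction from \BDD{} to \textsc{Weighted CNF-Satisfiability}, i.e.\ weighted satisfiability of a CNF formula with clauses of unbounded width, which is $W[2]$-complete. I introduce one Boolean variable $y_v$ for every vertex $v$, with the intended meaning that $y_v$ is \emph{true} exactly when $v$ is deleted, and I ask for a satisfying assignment of weight $k$. The property ``every surviving vertex keeps at most $d$ surviving neighbours'' says that for each vertex $v$ and each set $T\subseteq N(v)$ of $d+1$ neighbours, not all of $\{v\}\cup T$ may survive, i.e.\ the clause $y_v\vee\bigvee_{w\in T}y_w$ must hold. The difficulty here is that there are $\sum_{v}{\deg(v)\choose d+1}$ such clauses, which is not polynomial for unbounded $d$; so I would instead encode each per-vertex degree constraint compactly, using auxiliary counting variables together with a weight-balancing gadget that contributes a fixed amount to the total weight, keeping the target weight a function of $k$ only (alternatively one expresses the property as a weft-$2$ circuit and invokes the normalization characterization of $W[2]$).

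For hardness I would reduce from \textsc{Dominating Set}, parameterized by the solution size $k$, which is $W[2]$-complete. Given $G=(V,E)$ with $n:=|V|$, I build a graph $H$ as follows. For each $u\in V$ I create a \emph{selector} vertex $x_u$. For each $v\in V$ I create $k+1$ \emph{guard} vertices $g_v^1,\dots,g_v^{k+1}$ and make every $g_v^\ell$ adjacent to $x_u$ for all $u\in N_G[v]$; to each $g_v^\ell$ I also attach enough \emph{private} degree-one pendant vertices so that every guard has the same degree $T$, where I choose $T$ so large that $T-1\ge (k+1)n$. I then set $d:=T-1$ and keep the budget $k$. Deleting selectors plays the role of choosing dominators. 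On the one hand, $d=T-1$ is below the guard degree $T$, so every surviving guard needs at least one deleted neighbour; on the other hand, $d$ exceeds $(k+1)n$, which bounds the degree of every selector, so surviving selectors and pendants never violate the degree bound. The multiplicity $k+1$ together with the privacy of the pendants rules out cheating: for a fixed $v$, if a deletion set $S$ with $|S|\le k$ contained no selector $x_u$ with $u\in N_G[v]$, then each of the $k+1$ guards $g_v^\ell$ would have to be either deleted or repaired through one of its own private pendants, forcing $k+1$ distinct vertices into $S$, a contradiction. Hence every $v$ forces some $x_u$ with $u\in N_G[v]$ into $S$, so the selectors in $S$ form a dominating set of size at most $k$; conversely, deleting the selectors of a size-$k$ dominating set reduces every guard's degree to at most $d$ and leaves all other degrees below $d$, yielding a valid \BDD{} solution.

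I expect the main obstacle to be precisely this simultaneous tuning of the single global bound $d$: it must sit \emph{below} the guard degrees (to force repairs) yet \emph{above} all selector degrees (to spare the selectors), and the repairs must be channelled onto genuine selector deletions rather than onto pendants or guard deletions. Getting one global degree constraint to simulate the many independent covering constraints of \textsc{Dominating Set} is what the clone-multiplicity and pendant-padding construction is designed to achieve, and verifying that the budget $k$ is never enough to circumvent it is the delicate heart of the proof.
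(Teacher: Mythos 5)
Your hardness direction is correct, and it takes a slightly different route from the paper. The paper also reduces from \textsc{Dominating Set}, but more economically: it pads every vertex with degree-one neighbors until the graph is regular of degree~$d+1$, so that for each original vertex either the vertex itself or a neighbor must be deleted, and it disposes of pendant deletions by an exchange argument (``WLOG no newly added degree-one vertex is removed''). Your selector/guard construction with multiplicity~$k+1$ and private pendants replaces that exchange argument by a counting argument: failing to dominate a single~$v$ forces $k+1$ distinct deletions among the guards and their private pendants. Both reductions are sound; yours pays with a larger (but still polynomial, up to $f(k)$ factors) construction and buys a cleaner correctness proof, and your two-sided tuning of~$d$ (above the selector degrees, which are at most~$(k+1)n$, and below the guard degree~$T$) checks out.

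The membership direction, however, has a genuine gap, and it is precisely where the paper's actual work lies. You correctly identify that the naive CNF encoding needs ${\deg(v) \choose d+1}$ clauses per vertex, but neither of your proposed repairs is carried out, and the first is doubtful as stated: in \textsc{Weighted CNF-Satisfiability} the target weight~$k'$ may depend only on~$k$, whereas auxiliary counting variables that tally how many of~$v$'s neighbors are deleted would be forced true in numbers depending on~$n$ and~$d$, over all vertices simultaneously; a gadget ``contributing a fixed amount to the total weight'' cannot absorb an $n$-dependent contribution. Your parenthetical fallback (``express the property as a weft-two circuit'') is what the paper actually does, but the circuit is not the naive one and needs a specific idea absent from your proposal: introduce $k$ \emph{blocks} of $|V|$ variables $c[i,u]$, where the $i$th block encodes the $i$th chosen vertex of the solution (with side constraints enforcing one variable per block and no repeated vertex), and replace the threshold condition ``at least $\deg(v)-d$ deleted neighbors of~$v$'' by a disjunction over all size-$(\deg(v)-d)$ subsets~$R'$ of the $k$ \emph{blocks} of the conjunction over $i \in R'$ of the large disjunction $\sum_{u \in N(v)} c[i,u]$. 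Counting over the $k$ blocks instead of over the unboundedly many neighbors makes the subset-enumeration gate small (fan-in at most~$2^k$, a function of the parameter) and leaves only one level of large gates inside each $E_4(v)$, which together with the outermost large product yields weft two at constant depth. Without this block-choice encoding or an equivalent device, the per-vertex degree threshold is not obviously expressible at weft two, so your membership argument is incomplete.
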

In other words, there is no hope for fixed-parameter tractability with
respect to the parameter~$k$ in the case of unbounded $d$-values.
Due to the lack of space the proof of \autoref{thm:w2hard} and several
proofs of lemmas needed to show \autoref{thm:bddkernelmain} are omitted.

\section{Preliminaries}

A \emph{\bdd} for a graph~$G=(V,E)$ is a vertex subset whose
removal from~$G$ yields a graph in which each vertex has degree at
most~$d$. The central problem of this paper is 
\begin{quote}
  \BDD{}
  \vspace{-\parsep} 
  \begin{description}
    \setlength{\itemsep}{-\parsep}
    \item[Input] An undirected graph~$G=(V,E)$, and integers~$d \geq 0$ and~$k > 0$.
    \item[Question] Does there exist a \bdd~$S \subseteq V$
                     of size at most~$k$ for~$G$?
  \end{description}
\end{quote}

In this paper, for a graph~$G=(V,E)$ and a vertex set~$S \subseteq
V$, let~$G[S]$ be the subgraph of~$G$ induced by~$S$ and~$G - S :=
G[V \setminus S]$. The \emph{open neighborhood} of a vertex~$v$ or a
vertex set~$S \subseteq V$ in a graph~$G=(V,E)$ is denoted as~$N_G(v) := \{ u \in V \mid
\{u,v\} \in E\}$ and~$N_G(S) := \bigcup_{v \in S} N_G(v) \setminus S$,
respectively. The \emph{closed neighborhood} is denoted as~$N_G[v] :=
N_G(v) \cup \{v\}$ and~$N_G[S] := N_G(S) \cup S$.
We write~$V(G)$ and~$E(G)$ to denote the vertex and edge set of~$G$,
respectively.
A \emph{packing}~$P$ of a graph~$G$ is a set of pairwise vertex-disjoint
subgraphs of~$G$.
A graph has maximum degree~$d$
when every vertex in the graph has degree at most~$d$.
A graph property is called {\em hereditary} if every induced subgraph
of a graph with this property has the property as well.

Parameterized algorithmics~\cite{DF99,FG06,Nie06} is an approach to
finding optimal solutions for NP-hard problems. 
A common method in parameterized algorithmics is to
provide polynomial-time executable \emph{data reduction rules} that lead to a \emph{problem kernel}~\cite{GN07SIGACT}.
This is the most important concept for this paper.
Given a parameterized problem instance~$(I,k)$, a data reduction rule
replaces~$(I,k)$ by an instance~$(I',k')$ in polynomial time such
that~$|I'| \leq |I|$,~$k' \leq k$, and~$(I,k)$ is a Yes-instance if and
only if~$(I',k')$ is a Yes-instance.
A parameterized problem is said to have a problem kernel, or,
equivalently, \emph{kernelization}, if, after
the exhaustive application of the data reduction rules, the resulting
reduced instance has size~$f(k)$ for a function~$f$ depending only on~$k$.
Roughly speaking, the kernel size~$f(k)$ plays a similar role in the subject of problem kernelization as the
approximation factor plays for approximation algorithms.


\section{A Local Optimization Algorithm for Bounded-Degree Deletion}\label{sec:kernel}
The main result of this section is the following generalization of the 
Nemhauser-Trotter-Theorem~\cite{NT75} for \BDD{} 
with constant~$d$.

\begin{mytheorem}[BDD-DR-Theorem]\label{thm:bddkernelmain}
For an $n$-vertex and $m$-edge graph~$G=(V,E)$, we can compute two disjoint 
vertex subsets~$A$ and~$B$ in~$O(n^{5/2} \cdot m + n^3)$ time, such that the following three 
properties hold: 
\begin{enumerate}
    \setlength{\itemsep}{-\parsep}
\item If~$S'$ is a \bdd{} of~$G-(A \cup B)$, then~$S:=S'\cup A$ is a \bdd{}
of~$G$. 
\item There is a minimum-cardinality \bdd~$S$ of~$G$ with~$A \subseteq S$.
\item Every \bdd{} of~$G-(A \cup B)$ has size at least
      $\frac{|V\setminus (A \cup B)|}{d^3 + 4d^2+6d+4}$.
\end{enumerate} 
\end{mytheorem}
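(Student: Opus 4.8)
The plan is to reduce all three properties to the construction of a single combinatorial object --- what the paper calls a \cs{} --- namely a pair of disjoint vertex sets $(A,B)$ in $G$ satisfying three local conditions: (C1) the set $A$ separates $B$ from the rest, i.e.\ $N_G(B)\subseteq A\cup B$; (C2) the induced subgraph $G[B]$ has maximum degree at most $d$; and (C3) there is a family of $|A|$ pairwise vertex-disjoint subgraphs, all contained in $A\cup B$, each containing a vertex of internal degree at least $d+1$ (a ``deletion witness''). I would first show that (C1) and (C2) alone already give Property~1: if $S'$ is a \bdd{} of $G-(A\cup B)$, then after deleting $A\cup S'$ every vertex of $B$ retains only neighbors inside $B$ (by (C1), and because $S'$ avoids $A\cup B$ entirely), of which there are at most $d$ (by (C2)); while every surviving vertex outside $A\cup B$ has no neighbor in $B$ (again (C1)), and hence keeps the same degree it had in $(G-(A\cup B))-S'$, which is at most $d$.

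For Property~2 the key is an exchange argument. The plan is to prove, under (C1) and (C2), that for every \bdd{} $S^*$ of $G$ the set $\widehat S:=(S^*\setminus(A\cup B))\cup A$ is again a \bdd{} of $G$; the verification is the same degree bookkeeping as above, using that deleting all of $A$ can only lower the degrees of vertices outside $B$, and that $B$-vertices are governed solely by their $B$-neighbors. Since $|\widehat S|=|S^*|-|S^*\cap(A\cup B)|+|A|$, it then suffices to show $|S^*\cap(A\cup B)|\ge|A|$, and this is exactly what (C3) certifies: each of the $|A|$ disjoint witness subgraphs $Q$ forces $|S^*\cap Q|\ge 1$, because $S^*\cap Q$ must be a \bdd{} of the induced subgraph $G[Q]$, which contains a vertex of degree at least $d+1$. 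Applied to a minimum-cardinality $S^*$ this yields an optimal \bdd{} containing $A$. For $d=0$ the witness reduces to a matching saturating $A$ into the independent set $B$, so the whole construction specializes to a crown decomposition.

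It remains to construct an \emph{extremal} such $(A,B)$ and to derive Property~3 from extremality. I would set up the search for (C1)--(C3) as a matching/flow computation generalizing the K\"onig-type argument behind crowns, iterating it so that the returned \cs{} cannot be enlarged; the running time $O(n^{5/2}m+n^{3})$ is consistent with repeatedly computing maximum matchings. The payoff of extremality is the density bound: if some \bdd{} $S$ of $G-(A\cup B)$ were smaller than $|V\setminus(A\cup B)|/(d^3+4d^2+6d+4)$, then the low-degree part $H:=(V\setminus(A\cup B))\setminus S$, which has maximum degree at most $d$, would be so large relative to $S$ that a fresh separated low-degree crown together with its deletion witness could be extracted and folded into $(A,B)$, contradicting extremality. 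Concretely one charges the vertices of $H$ to $S$ through bounded-radius neighborhoods: because $G[H]$ has maximum degree $d$, only $O(d^3)$ vertices of $H$ can lie close enough to a given witness site, and summing these contributions is what produces the explicit constant $d^3+4d^2+6d+4$.

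The main obstacle, I expect, is the design of condition (C3) and its algorithmic extraction --- finding the purely combinatorial substitute for the saturating matching that simultaneously (i) guarantees the separation (C1) and the degree bound (C2), (ii) produces a witness of size exactly $|A|$ forcing $|S^*\cap(A\cup B)|\ge|A|$, and (iii) is extremal in a sense sharp enough to yield Property~3. The delicate quantitative step is the charging argument for the constant: one must verify that irreducibility of the \cs{} really does cap the number of degree-$\le d$ vertices per solution vertex at $d^3+4d^2+6d+4$, and confirming that this precise bound is attained, rather than some larger polynomial in $d$, is the principal calculation behind the theorem.
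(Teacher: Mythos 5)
Your conditions (C1)--(C3) do correctly imply Properties~1 and~2, and your exchange argument is essentially the paper's \autoref{prop:commitmentreduction}: your $|A|$ disjoint witness subgraphs are exactly the $|C|$ vertex-disjoint $(d+1)$-stars that, in the paper, certify that $C$ is a \emph{minimum} \bdd{} of $G[C\cup D]$ (\autoref{lem:IHprops}). The first genuine gap is your separation requirement (C1), $N_G(B)\subseteq A\cup B$. This is strictly stronger than what the paper works with --- its condition only asks that every vertex of $N_G[D]\setminus C$ have degree at most $d$ in $G-C$, and the paper explicitly warns that for $d\ge 1$ the set $A$ is \emph{not} necessarily a separator. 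The difference is not cosmetic: the terminal branch of \findextremal{} returns $C=\emptyset$ together with $D$ equal to the vertices at distance at least~$3$ from the witness $X$, a set whose neighborhood (the distance-$2$ vertices) lies entirely outside $C\cup D$. If you try to restore separation by absorbing the distance-$\le 2$ vertices into $B$, you are forced to place into $A$ vertices of $X$ certified by disjoint $(d+1)$-stars with leaves in $B$ --- precisely the certification that can fail, and which the paper's forbidden-set iteration ($F^X_j$, $F^Y_j$, with $F^X_j\subsetneq F^X_{j+1}$ by \autoref{lem:F_j}) exists to work around, possibly until \emph{all} of $X$ is forbidden. You never prove that a separated structure of the required size exists, and the paper's construction strongly suggests it need not; your whole plan rests on that unproved existence.

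The second gap is the extremal extraction itself and the constant. You assert a charging argument in which irreducibility caps the number of low-degree vertices per solution vertex, with ``only $O(d^3)$ vertices of $H$ close to a given witness site,'' but this is not where $d^3+4d^2+6d+4$ comes from, and no proof of any such cap is sketched. In the paper the constant factors as $(d+2)\bigl((d+1)^2+1\bigr)$: the factor $d+2$ is the approximation guarantee of the greedily pruned maximal $(d+1)$-star packing $X$ (each packed star forces one deletion, so every \bdd{} of the reduced graph has size at least $|X|/(d+2)$), and the factor $(d+1)^2$ is the bound $|F^Y_j|\le (d+1)^2\,|F^X_j|$ of \autoref{lem:Fsize} --- at most $d+1$ packed leaves per newly forbidden center, each contributing a closed neighborhood of size at most $d+1$ in $G-X$ --- which via \autoref{lem:sizebound} yields $|Y\setminus D|\le (d+1)^2\,|X\setminus C|$ and hence $|Y|\le (d+1)^2|X|$ at termination. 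Your sketch contains no analogue of this bookkeeping, no argument that each failed round makes quantifiable progress (the paper needs the strict growth of $F^X_j$ both for termination and for the $O(n^{5/2}m+n^3)$ bound, via at most $n$ calls each costing $O(n^{3/2}m+n^2)$ by Hopcroft--Karp star packings), and no mechanism forcing the uncommitted part of the residual to stay linear in $|X\setminus C|$. As written, the ``delicate quantitative step'' you flag at the end is the entire theorem, and it is assumed rather than proved.
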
 
This first two properties are called the {\em local optimality conditions}. 
The remainder of this section is dedicated to the proof of this theorem. 
More specifically, we present an algorithm called \computeCD{} (see \autoref{fig:computeCD})
which outputs two sets~$A$ and~$B$ fulfilling the three properties given in~\autoref{thm:bddkernelmain}. 
The core of this algorithm is the procedure \findextremal{} (see~\autoref{fig:bdd-compression}) 
running in~$O(n^{3/2} \cdot m + n^2)$ time. This procedure returns two disjoint vertex subsets~$C$ 
and~$D$ that, among others, satisfy the local optimality conditions. The procedure is 
iteratively called by \computeCD{}. The overall output sets~$A$ and~$B$ then are the union of the 
outputs of all applications of \findextremal{}. 
Actually, \findextremal{} 
searches for~$C\subseteq V$,~$D\subseteq V$,~$C\cap D=\emptyset$ satisfying the following two 
conditions: 
\begin{enumerate}[\bf C1]
    \setlength{\itemsep}{-\parsep}
\item \label{comstr:D} Each vertex in~$N_G[D] \setminus C$ has degree
                            at most~$d$ in~$G - C$, and
\item \label{comstr:bddset} $C$ is a minimum-cardinality bdd-$d$-set for~$G[C
                                \cup D]$.
\end{enumerate}
It is not hard to see that these two conditions are stronger than 
the local optimality conditions of 
\autoref{thm:bddkernelmain}:

\begin{mylemma}\label{prop:commitmentreduction}
  Let~$C$ and~$D$ be two vertex subsets satisfying conditions~C\ref{comstr:D}
and~C\ref{comstr:bddset}. Then, the following is true: 
\begin{enumerate}[(1)]
    \setlength{\itemsep}{-\parsep}
\item If~$S'$ is a \bdd{} of~$G-(C\cup D)$, then~$S:=S'\cup C$ is a \bdd{}
of~$G$. 
\item There is a minimum-cardinality \bdd~$S$ of~$G$ with~$C\subseteq S$. 
\end{enumerate}
\end{mylemma}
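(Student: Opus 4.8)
The plan is to prove the two parts in order, deriving part~(2) from part~(1) together with an exchange argument that invokes condition~C\ref{comstr:bddset}.

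For part~(1), I would take a \bdd{}~$S'$ of~$G-(C\cup D)$ and verify directly that every vertex~$v$ remaining in~$G-(S'\cup C)$ has degree at most~$d$. The key observation is that these vertices split naturally according to their relationship to~$D$. If~$v\in N_G[D]$, then since~$v\notin C$ we have~$v\in N_G[D]\setminus C$, so condition~C\ref{comstr:D} bounds the degree of~$v$ in~$G-C$ by~$d$; as~$G-(S'\cup C)$ is a subgraph of~$G-C$, the degree of~$v$ there is also at most~$d$. If instead~$v\notin N_G[D]$, then~$v$ has no neighbor in~$D$ and lies in~$V\setminus(C\cup D)$, so deleting~$D$ does not change its neighborhood; hence its degree in~$G-(S'\cup C)$ equals its degree in~$(G-(C\cup D))-S'$, which is at most~$d$ because~$S'$ is a \bdd{} of~$G-(C\cup D)$.

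For part~(2), I would start from an arbitrary minimum-cardinality \bdd~$S^*$ of~$G$ and modify it into one containing~$C$ without increasing its size. First I would observe that~$S^*\cap(C\cup D)$ is a \bdd{} of~$G[C\cup D]$, since~$G[(C\cup D)\setminus S^*]$ is an induced subgraph of~$G-S^*$ and therefore has maximum degree at most~$d$; by the minimality in condition~C\ref{comstr:bddset} this yields~$|S^*\cap(C\cup D)|\ge|C|$. Next I would set~$S':=S^*\setminus(C\cup D)$ and check that it is a \bdd{} of~$G-(C\cup D)$, because~$(G-(C\cup D))-S'=G-(S^*\cup C\cup D)$ is an induced subgraph of~$G-S^*$. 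Applying part~(1) to this~$S'$ shows that~$S:=S'\cup C$ is a \bdd{} of~$G$, and by construction~$C\subseteq S$. Finally, since~$S'$ and~$C$ are disjoint, a short counting step gives~$|S|=|S'|+|C|\le|S'|+|S^*\cap(C\cup D)|=|S^*|$, so~$S$ is again of minimum cardinality.

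The main obstacle is the degree-counting in part~(1): one must be certain that condition~C\ref{comstr:D} covers every vertex whose degree might increase when~$D$ is reinserted into the graph, while the defining property of~$S'$ handles all the rest. The clean separation into the two cases ``$v\in N_G[D]$'' and ``$v\notin N_G[D]$'' is exactly what makes this precise, and it reveals that C\ref{comstr:D} is tailored precisely to the boundary of~$D$. Once part~(1) is established, part~(2) becomes a routine exchange argument.
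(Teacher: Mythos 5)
Your proposal is correct and takes essentially the same route as the paper's own proof: part~(1) is the same boundary/interior split (you spell out the case $v\notin N_G[D]$, which the paper leaves implicit), and part~(2) is the same exchange argument---your set $(S^*\setminus(C\cup D))\cup C$ coincides with the paper's $(S\setminus D')\cup C'$, and your inequality $|S^*\cap(C\cup D)|\ge|C|$ is the paper's $|C'|\le|D'|$ restated, both resting on heredity plus condition~C\ref{comstr:bddset}. The only organizational difference is that you certify the exchanged set is a \bdd{} by first checking that $S^*\setminus(C\cup D)$ is a \bdd{} of $G-(C\cup D)$ and then invoking part~(1), whereas the paper re-verifies the degree bound on $N_G[D']$ directly from condition~C\ref{comstr:D}; both are sound, and your modular packaging is arguably slightly cleaner.
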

\autoref{prop:commitmentreduction} will be used in the proof 
of~\autoref{thm:bddkernelmain}---it helps to make the description of the 
underlying algorithm and the corresponding correctness proofs more accessible. 
As a direct application of~\autoref{thm:bddkernelmain}, 
we get the following corollary. 
\begin{mycorollary} 
\BDD{} with constant~$d$ admits a problem kernel with at most~$(d^3 + 4d^2+6d+4)\cdot k$ vertices, 
which is computable in~$O(n^{5/2} \cdot m + n^3)$ time.
\end{mycorollary}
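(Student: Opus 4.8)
The plan is to derive the kernel directly from \autoref{thm:bddkernelmain} by turning its two output sets~$A$ and~$B$ into a single data reduction rule. Given an instance~$(G,d,k)$ of \BDD{}, I would first invoke the algorithm of \autoref{thm:bddkernelmain} to compute the disjoint sets~$A$ and~$B$ in~$O(n^{5/2}\cdot m + n^3)$ time. The reduced instance is then~$(G-(A\cup B),\,d,\,k-|A|)$: by Property~2 there is a minimum-cardinality \bdd{} of~$G$ containing~$A$, so it is safe to commit the vertices of~$A$ to the solution (decreasing the budget by~$|A|$) and to discard the vertices of~$B$. Note that this respects~$|I'|\le|I|$ and~$k'\le k$ as required of a data reduction rule.

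Next I would verify that this rule is correct, i.e., that~$(G,d,k)$ is a Yes-instance if and only if~$(G-(A\cup B),d,k-|A|)$ is. For the ``if'' direction, a \bdd~$S'$ of~$G-(A\cup B)$ of size at most~$k-|A|$ yields, by Property~1, the \bdd~$S:=S'\cup A$ of~$G$ of size at most~$k$. For the ``only if'' direction, take a minimum-cardinality \bdd~$S$ of~$G$ with~$A\subseteq S$ guaranteed by Property~2; since~$G-S$ has maximum degree at most~$d$, so does every induced subgraph of it, and in particular~$S':=S\setminus(A\cup B)$ is a \bdd{} of~$G-(A\cup B)$ of size at most~$|S|-|A|\le k-|A|$. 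The degenerate cases are handled separately: if~$|A|>k$ the original instance is a trivial No-instance, and~$|A|=k$ reduces to testing whether~$G-(A\cup B)$ already has maximum degree at most~$d$.

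Finally, I would bound the kernel size using Property~3. If the reduced instance is a Yes-instance, any witnessing \bdd{} of~$G-(A\cup B)$ has size at most~$k-|A|$, so Property~3 gives
$$
|V\setminus(A\cup B)| \le (d^3+4d^2+6d+4)\cdot(k-|A|) \le (d^3+4d^2+6d+4)\cdot k.
$$
If instead~$|V\setminus(A\cup B)|$ exceeds this bound, Property~3 certifies that every \bdd{} of~$G-(A\cup B)$ has size greater than~$k-|A|$, so the instance is a No-instance and may be replaced by a trivial constant-size No-instance. In either case the output has at most~$(d^3+4d^2+6d+4)\cdot k$ vertices, and the overall running time is dominated by the computation of~$A$ and~$B$, yielding the claimed~$O(n^{5/2}\cdot m + n^3)$ bound.

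Essentially all the substance already resides in \autoref{thm:bddkernelmain}, so the only delicate point is the parameter bookkeeping: one must check that committing~$A$ and deleting~$B$ decreases the budget consistently in both directions of the equivalence, and that the vertex bound is actually enforced on No-instances via the trivial-instance fallback, since Property~3 only controls the vertex count relative to the size of an existing solution rather than unconditionally.
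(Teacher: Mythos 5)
Your proposal is correct and is essentially the argument the paper intends: the paper states this corollary as a direct application of \autoref{thm:bddkernelmain}, and your reduction rule (commit~$A$, discard~$B$, decrement the budget to~$k-|A|$, and reject via Property~3 when $|V\setminus(A\cup B)|$ exceeds the bound) is exactly the routine bookkeeping that justifies it, with both directions of the equivalence correctly resting on Properties~1 and~2 and on heredity of the bounded-degree property. Your explicit treatment of the degenerate cases~$|A|\geq k$ and of the trivial No-instance fallback is sound and, if anything, slightly more careful than what the paper leaves implicit.
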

We use the following easy-to-verify forbidden subgraph characterization of bounded-degree graphs:
A graph~$G$ has maximum degree~$d$ if
and only if there is no ``$(d+1)$-star'' in~$G$. 
\begin{definition}
  For~$s \geq 1$, the graph~$K_{1,s} = (\{u,v_1,\ldots,v_s\},
  \{\{u,v_1\},\ldots,\{u,v_s\}\})$ is called an \emph{$s$-star}.
  The vertex~$u$ is called the \emph{center} of the star.
  The vertices~$v_1,\ldots,v_s$ are the \emph{leaves} of the star.
  A \emph{${\leq}s$-star} is an $s'$-star with~$s' \leq s$. 
\end{definition}
Due to this forbidden subgraph characterization of bounded-degree graphs, 
we can also derive a linear kernelization for the {\sc $(d+1)$-Star Packing} problem.
In this problem, given an undirected
graph, one seeks for at least~$k$ vertex-disjoint $(d+1)$-stars
for a constant~$d$.
With a slight modification of the proof of~\autoref{thm:bddkernelmain},
we get the following corollary.

\begin{mycorollary} \label{cor:star-packing}
{\sc $(d+1)$-Star Packing} admits a problem kernel with at
most~$(d^3 + 4d^2+6d+4)\cdot k$ vertices, which is computable in~$O(n^{5/2}
\cdot m + n^3)$ time.
\end{mycorollary}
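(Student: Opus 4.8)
The plan is to prove \autoref{cor:star-packing} by reducing $(d+1)$-\textsc{Star Packing} to the kernelization guaranteed by \autoref{thm:bddkernelmain}, exploiting the forbidden-subgraph duality already noted: a graph has maximum degree $d$ precisely when it contains no $(d+1)$-star. The key observation is that a vertex set $S$ is a \bdd{} for $G$ exactly when $G-S$ is $(d+1)$-star-free, so destroying all $(d+1)$-stars by vertex deletion (the \BDD{} problem) and packing vertex-disjoint $(d+1)$-stars are tightly linked through an LP-style duality between a covering and a packing quantity. First I would make precise how the sets $A$ and $B$ output by the \computeCD{} algorithm behave with respect to \emph{packings} rather than deletions, restating the two local-optimality conditions (1) and (2) in packing form. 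Concretely, I expect that the committed set $A$ can be used to certify that every maximum packing can be assumed to use the $(d+1)$-stars ``rooted'' in $A\cup B$ in a controlled way, so that searching for a maximum $(d+1)$-star packing may be confined to $G[A\cup B]$ together with a bounded remainder.

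The main structural step is to argue that after running \computeCD{}, the graph $G-(A\cup B)$ has maximum degree at most $d$, i.e. contains no $(d+1)$-star at all. This should follow from property~(1): since $S'=\emptyset$ is a \bdd{} of $G-(A\cup B)$ iff $G-(A\cup B)$ already has maximum degree $d$, and since the procedure is designed to place into $A$ exactly those vertices needed to realize this, the remainder is $(d+1)$-star-free. Consequently \emph{no} $(d+1)$-star of a packing can lie entirely inside $V\setminus(A\cup B)$; every star in any packing must use at least one vertex of $A\cup B$. This is the crucial bridge: it bounds the number of stars in \emph{any} packing by $|A\cup B|$, and property~(3) of \autoref{thm:bddkernelmain} bounds $|V\setminus(A\cup B)|$ in terms of the minimum \bdd{} size, hence in terms of the parameter $k$ via the covering/packing duality (a minimum \bdd{} has size at most $(d+1)$ times the maximum number of disjoint $(d+1)$-stars, since deleting one vertex from each packed star destroys it).

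Next I would assemble the kernel. The reduced instance is $G[A\cup B]$ (or $G$ restricted to $N[A\cup B]$, whichever the slight modification of the proof dictates), and the parameter $k$ is preserved. To bound the number of vertices, I combine: (i) every packed star meets $A\cup B$, so the packing number is at most $|A\cup B|$; and (ii) the bound from property~(3), $|V\setminus(A\cup B)|\le (d^3+4d^2+6d+4)\cdot|S^\ast|$ where $S^\ast$ is a minimum \bdd{}, together with $|S^\ast|\le (d+1)\cdot(\text{packing number})\le(d+1)k$ on Yes-instances. The arithmetic must be arranged so that the total vertex count $|A\cup B|+|V\setminus(A\cup B)|$ collapses to at most $(d^3+4d^2+6d+4)\cdot k$; I anticipate that the ``slight modification of the proof'' referred to in the corollary statement is precisely a re-derivation of the constant in property~(3) directly against the packing parameter, so that the same constant reappears without the intermediate factor of $(d+1)$. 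The running time is inherited verbatim from \autoref{thm:bddkernelmain}.

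The part I expect to be the main obstacle is verifying the equivalence of instances, that is, that $G$ has a $(d+1)$-star packing of size $k$ if and only if the reduced instance does, \emph{with the same $k$}. The forward direction is easy since the reduced graph is an induced subgraph and one must only argue no stars are lost; the reverse, and especially the claim that a maximum packing can be realized using only vertices near $A\cup B$, is where the commitment structure (conditions~C\ref{comstr:D} and~C\ref{comstr:bddset}) must be invoked via \autoref{prop:commitmentreduction} to perform an exchange argument on the packed stars. I would handle this by showing that any star using a vertex of $V\setminus(A\cup B)$ can be rerouted or that such configurations are impossible because of the degree bound in C\ref{comstr:D}; making this rerouting precise while keeping the parameter unchanged is the delicate point, and it is essentially the same combinatorial exchange already carried out in the proof of \autoref{thm:bddkernelmain}, adapted from deletion to packing.
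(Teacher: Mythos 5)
There is a genuine gap at the center of your argument. Your ``main structural step''---that after \computeCD{} terminates the graph $G-(A\cup B)$ has maximum degree at most~$d$ and is therefore $(d+1)$-star-free---is false. Property~(1) of \autoref{thm:bddkernelmain} is a conditional statement (any \bdd{}~$S'$ of the residual graph yields the \bdd{}~$A\cup S'$ of~$G$); it does not assert that $S'=\emptyset$ works, and the algorithm stops as soon as the residual~$Y$ satisfies $|Y|\leq(d+1)^2\cdot|X|$, at which point the witness~$X$ (an approximate \bdd{} of the remaining graph) is still present in $G-(A\cup B)$ together with its possibly many high-degree vertices. Indeed, if $G-(A\cup B)$ had maximum degree~$d$, the empty set would be a \bdd{} of it and property~(3) would force $|V\setminus(A\cup B)|=0$; more fundamentally, a kernelization whose residual instance were trivially solvable would solve \BDD{} outright in polynomial time. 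With this claim gone, the statements built on it collapse: it is not true that every packed star meets $A\cup B$, the packing number is not bounded by $|A\cup B|$, the reduced instance is not $G[A\cup B]$ (the kernel is the opposite part, $G-(A\cup B)$), and the parameter is not preserved---it must drop by~$|C|$ in each round. Your covering/packing inequality is also misquoted: deleting one vertex from each star of a packing destroys only the packed stars, not all $(d+1)$-stars of the graph; the bound the paper actually uses is $\tau\leq(d+2)\cdot\nu$, obtained by deleting \emph{all} vertices of a \emph{maximal} packing.

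The intended ``slight modification'' runs along the lines you only hint at in your last paragraph, but applied to the residual graph. By condition~C\ref{comstr:D}, any $(d+1)$-star meeting~$D$ must meet~$C$: otherwise its center would be a vertex of $N_G[D]\setminus C$ of degree at least~$d+1$ in $G-C$. By \autoref{lem:IHprops}, $G[C\cup D]$ contains exactly~$|C|$ vertex-disjoint $(d+1)$-stars (centers in~$C$, leaves in~$D$). Together these give the packing analogue of the local optimality conditions, namely $\nu(G)=\nu(G-(C\cup D))+|C|$: at most~$|C|$ stars of any packing can touch $C\cup D$ (stars touching~$D$ touch~$C$, and at most~$|C|$ disjoint stars touch~$C$), while the~$|C|$ stars inside $G[C\cup D]$ can always be added to any packing of $G-(C\cup D)$. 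So the data reduction deletes $C\cup D$ and decreases~$k$ by~$|C|$. For the size bound no extra factor of~$(d+1)$ appears and nothing needs rebalancing: at termination $|V'|=|X|+|Y|\leq((d+1)^2+1)\cdot|X|$, and $|X|\leq(d+2)\cdot\nu(G')$ because~$X$ stems from a packing of vertex-disjoint $(d+1)$-stars, each on~$d+2$ vertices; hence $|V'|\leq(d^3+4d^2+6d+4)\cdot\nu(G')$, so if the reduced graph has more than $(d^3+4d^2+6d+4)\cdot k'$ vertices one answers Yes, and otherwise it is the claimed kernel. Only the running-time inheritance in your write-up is correct as stated.
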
 
For~$d \geq 2$, the best known kernelization result was a~$O(k^2)$ kernel~\cite{PS06}.
Note that the special case of {\sc $(d+1)$-Star Packing} with~$d=1$
is also called {\sc $P_3$-Packing}, a problem well-studied in the
literature, see~\cite{PS06,WNFC08}.
\autoref{cor:star-packing} gives a 
$15k$-vertex problem kernel.
The best-known bound is~$7k$~\cite{WNFC08}.
However, the improvement from the formerly best bound~$15k$~\cite{PS06} is
achieved by improving a properly defined witness structure by local modifications.
This trick also works with our approach, that is, 
we can show that the NT-like approach also yields a~$7k$-vertex problem kernel for
\textsc{$2$-Star Packing}.

\subsection{The Algorithm}\label{sect:algorithm}
We start with an informal description of the algorithm. 
As stated in the introduction of this section, the central part is Algorithm~\computeCD{}
shown in~\autoref{fig:computeCD}.

\begin{figure}[t]
  \normalsize
\begin{flushleft}
  \alg{\computeCD{}~$(G)$}%
  {An undirected graph~$G$.}%
  {Vertex subsets~$A$ and~$B$ satisfying the three properties of \autoref{thm:bddkernelmain}.}
{
  \begin{pseudocode}
    \item \label{a2:init}    $A := \emptyset, B := \emptyset$
    \item \label{a2:compXY}  Compute a witness~$X$ and the corresponding residual~$Y := V \setminus X$ for~$G$
    \item \label{a2:test}    \textbf{If}~$|Y| \leq (d+1)^2 \cdot |X|$ \textbf{then return}~$(A,B)$
    \item \label{a2:find_ex} $(C,D) \leftarrow$ \findextremal{}~$(G,X,Y)$.
    \item \label{a2:updateG} $G \leftarrow G - (C \cup D); A \leftarrow A \cup C; B \leftarrow B \cup D;$
                                \textbf{goto} line~\ref{a2:compXY}
  \end{pseudocode}
}
\end{flushleft}
  \vspace{-\parsep}
  \caption{Pseudo-code of the main algorithm for computing~$A$ and~$B$.}
  \label{fig:computeCD}
\end{figure}



Using the characterization of bounded-degree graphs by forbidding
large stars, in line~\ref{a2:compXY} \computeCD{} starts with computing two vertex sets~$X$
and~$Y$: First, with a straightforward greedy algorithm, compute
a {\em maximal $(d+1)$-star packing}~ of~$G$, that is,
a set of vertex-disjoint $(d+1)$-stars that cannot be extended by
adding another $(d+1)$-star.  Let~$X$ be the set of vertices of the star
packing. Since the number of stars in the packing is a lower bound for
the size of a minimum bdd-$d$-set,~$X$ is a factor-$(d+2)$ approximate
bdd-$d$-set. Greedily remove vertices from~$X$ such that~$X$ is
still a bdd-$d$-set, and finally set~$Y := V \setminus X$.
We call~$X$ the \emph{witness} and~$Y$ the corresponding \emph{residual}.

\begin{figure}[t]
  \normalsize
\begin{flushleft}
  \proc{\findextremal{}~$(G,X,Y)$}%
  {An undirected graph~$G$, witness~$X$, and residual~$Y$.}%
  {Vertex subsets~$C$ and~$D$ satisfying the local optimality conditions.}%
{
  \begin{pseudocode}
    \item \label{a1:bipartite} $\J{} \leftarrow$ bipartite graph with~$X$ and~$Y$ as its two 
                          vertex subsets and \\
                          $E(\J{})\leftarrow \{\{u,v\} \in E(G) \mid u\in X \text{ and } v\in Y\}$
    \item \label{a2:start}   $F^X_0 \leftarrow \emptyset$ \pccomment{Initialize empty set of forbidden vertices}
    \item \label{a2:oloop}   start with~$j=0$ and \textbf{while}~$F^X_j \not= X$ \textbf{do} \pccomment{Loop while not all vertices in~$X$ are forbidden}
    \item \label{a2:forb}    \pcindent{1} $F^Y_j \leftarrow N_G[N_\J{}(F^X_j)] \setminus X$ \pccomment{Determine forbidden vertices in~$Y$}
    \item \label{a2:pack}    \pcindent{1} $P \leftarrow $ \textbf{star-packing}$(\J{} - (F^X_j \cup F^Y_j), X \setminus F^X_j, Y \setminus F^Y_j,d)$
    \item \label{a2:findHb}  \pcindent{1} $D_0 \leftarrow Y \setminus (F^Y_j \cup V(P))$ \pccomment{Vertices in~$Y$ that are not forbidden and not in~$P$}
    \item \label{a2:line7}                  \pcindent{1} start with~$i=0$ and \textbf{repeat} \pccomment{Start search for~$C,D$ satisfying C\ref{comstr:bddset}} 
    \item \label{a2:findHc} \pcindent{2}     $C_i \leftarrow N_\J{}(D_i)$ 
    \item                   \pcindent{2}     $D_{i+1} \leftarrow N_{P}(C_i) \cup D_i$
    \item                   \pcindent{2}     $i \leftarrow i + 1$
    \item \label{a2:until}  \pcindent{1} \textbf{until}~$D_{i} = D_{i-1}$
    \item \label{a2:findHe} \pcindent{1} $C \leftarrow C_i$,~$D\leftarrow D_i$
    \item \label{a2:cond}    \pcindent{1} \textbf{if}~$C = X \setminus F^X_j$ \textbf{then} \pccomment{$C,D$ also satisfy~C\ref{comstr:D}} 
    \item \label{a2:retker}  \pcindent{2}    \textbf{return}~$(C,D)$ 
    \item \label{a2:setF}    \pcindent{1} $F^X_{j+1} \leftarrow X \setminus C$ \pccomment{Determine forbidden vertices in~$X$ for next iteration}
                                          \hfill 
    \item                    \pcindent{1} $j \leftarrow j + 1$
    \item  \textbf{end while}
    \item  \label{a2:setyy}  $F^Y_j\leftarrow N_G[N_\J{}(F^X_j)]\setminus X$ \pccomment{Recompute forbidden vertices in~$Y$ (as in line~\ref{a2:forb})}
    \item \label{a2:ret} \textbf{return}~$(\emptyset, V\setminus (X\cup F^Y_j))$	
  \end{pseudocode}
}
\vspace*{2ex}
  \textbf{Procedure:} \textbf{star-packing}~$(\J{}, V_1,V_2,d)$\\
  \textbf{Input:} A bipartite graph~$\J{}$ with two vertex subsets~$V_1$
                  and~$V_2$.\\
  \textbf{Output:} A maximum-edge packing of stars that have their
                   centers in~$V_1$ and have at most~$d+1$ leaves
                   in~$V_2$.\\
   See \autoref{lem:starpacking}, the straightforward implementation details using matching techniques are omitted.
\end{flushleft}
  \caption{Pseudo-code of the procedure computing the intermediary vertex subset pair~$(C,D)$.}
  \label{fig:bdd-compression}
\end{figure}
If the residual~$Y$ is too big (condition in line~\ref{a2:test}),
the sets~$X$ and~$Y$ are passed in line~\ref{a2:find_ex} to the
procedure \findextremal{} in~\autoref{fig:bdd-compression}
which 
computes two sets~$C$ and~$D$ satisfying conditions C\ref{comstr:D} and C\ref{comstr:bddset}. 
Computing~$X$ and~$Y$
represents the first step to find a subset pair satisfying 
condition C\ref{comstr:D}:  Since
there is no vertex that has degree more than~$d$ in~$G-X$ (due to the 
fact that~$X$ is a bdd-$d$-set), the search is limited
to those subset pairs where~$C$ is a subset of the witness~$X$
and~$D$ is a subset of~$Y$.


Algorithm~\computeCD{} calls \findextremal{} iteratively until the sets~$A$ and~$B$, 
which are constructed by the union of the outputs of all applications of \findextremal{} (see line~\ref{a2:updateG}), 
satisfy the third property in Theorem~\ref{thm:bddkernelmain}. 
In the following, we intuitively describe
the basic ideas behind \findextremal{}.

To construct the set~$C$ from~$X$, we compute again a star packing~$P$
with the centers of the stars being from~$X$ 
and the leaves being from~$Y$. We relax, on the 
one hand, the requirement that the stars in the packing have exactly~$d+1$ leaves, 
that is, the packing~$P$ might contain $\le d$-stars. 
On the other hand,~$P$ should have a maximum number of edges. 
The rough idea behind the requirement for a maximum number of edges is to
maximize the number of $(d+1)$-stars in~$P$ in the course of the algorithm. Moreover, 
we can observe that, by setting~$C$ equal to the center set of the $(d+1)$-stars 
in~$P$ and~$D$ equal to the leaf set of the $(d+1)$-stars in~$P$,~$C$ is 
a minimum \bdd{} of~$G[C\cup D]$ (condition C\ref{comstr:bddset}). We call 
such a packing a {\em maximum-edge $X$-center $\le(d+1)$-star packing}. 
For computing~$P$, 
the algorithm constructs an auxiliary bipartite graph~$\J{}$ 
with~$X$ as one vertex subset and~$Y$ as the other. 
The edge set of~$\J{}$ consists of the edges in~$G$ with exactly one endpoint 
in~$X$. See line~\ref{a1:bipartite} of~\autoref{fig:bdd-compression}. 
Obviously, a maximum-edge $X$-center 
$\le(d+1)$-star packing of~$G$ corresponds one-to-one with a maximum-edge
packing of stars in~$\J{}$ that have their centers in~$X$ and have at most~$d+1$
leaves in the other vertex subset. Then, the star packing~$P$
can be computed by using techniques for computing maximum matchings 
in~$\J{}$ (in the following, let~\textbf{star-packing}($\J{}$,$V_1$,$V_2$,$d$) 
denote an algorithm that computes a maximum-edge $V_1$-center $\le(d+1)$-star packing~$P$ 
on the bipartite 
graph~$\J{}$).

The most involved part of \findextremal{} in~\autoref{fig:bdd-compression}
is to guarantee that the output subsets in line~\ref{a2:find_ex} fulfill condition
C\ref{comstr:D}. To this end, one uses an iterative approach to compute the
star packing~$P$. Roughly speaking, in each iteration, if the
subsets~$C$ and~$D$ do not fulfill condition C\ref{comstr:D}, then
exclude from further iterations the vertices from~$D$ that themselves or whose neighbors violate
this condition.  See lines~\ref{a2:start}
to~\ref{a2:setF} of~\autoref{fig:bdd-compression} for more details
of the iterative computation.  Herein, for~$j\ge 0$, the sets~$F^X_j
\subseteq X$ and~$F^Y_j \subseteq Y$, where~$F^X_j$ is initialized
with the empty set, and~$F^Y_j$ is computed using~$F^X_j$, store the
vertices excluded from computing~$P$.  To find the vertices that
themselves cause the violation of the condition, that is, vertices
in~$D$ that have neighbors in~$X \setminus C$, one uses an augmenting
path computation in lines~\ref{a2:line7} to~\ref{a2:until} to get in
line~\ref{a2:findHe} subsets~$C$ and~$D$ such that the vertices in~$D$
do \emph{not} themselves violate the condition.  Roughly speaking, the
existence of an edge~$e$ from some vertex in~$D$ to some vertex in~$X
\setminus C$ would imply that the $\le(d+1)$-star packing is not maximum
(witnessed by an augmenting path beginning with~$e$---in principle, this
idea is also used for finding crown decompositions, cf.~\cite{AFLS07}).
The vertices whose neighbors cause the violation of condition
C\ref{comstr:D} are all vertices in~$D$ with neighbors in~$Y \setminus
D$ that themselves have neighbors in~$X \setminus C$.  These neighbors
in~$Y \setminus D$ and the corresponding vertices in~$D$ are excluded
in line~\ref{a2:forb} and line~\ref{a2:setyy}.  We will see that the number of all excluded vertices
is~$O(|X \setminus C|)$, thus, in total, we do not exclude too many vertices
with this iterative method.
The formal proof of correctness is given in the following subsection.

\subsection{Running Time and Correctness}\label{sect:proof}
Now, we show that \computeCD{} in \autoref{fig:computeCD} computes in the claimed time two vertex 
subsets~$A$ and~$B$ that fulfill the three properties given in \autoref{thm:bddkernelmain}. 

\subsubsection{Running Time of \findextremal{}.} 
We begin with the proof of the running time of the procedure
\findextremal{} in~\autoref{fig:bdd-compression}, which uses the
following lemmas.

\begin{mylemma}\label{lem:starpacking}
Procedure {\bf star-packing}$(\J{},V_1,V_2,d)$ in~\autoref{fig:bdd-compression} 
runs in~$O(\sqrt{n} \cdot m)$ time. 
\end{mylemma}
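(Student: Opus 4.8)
The plan is to reduce the computation of a maximum-edge $V_1$-center ${\leq}(d+1)$-star packing on the bipartite graph~$\J{}$ to a single maximum-matching computation in a suitably modified bipartite graph, and then invoke the Hopcroft--Karp algorithm, which runs in~$O(\sqrt{n}\cdot m)$ time on a bipartite graph with~$n$ vertices and~$m$ edges. The key observation is that a ${\leq}(d+1)$-star packing with centers in~$V_1$ is exactly a \emph{degree-constrained subgraph} of~$\J{}$ in which each center vertex in~$V_1$ may be incident to at most~$d+1$ chosen edges and each leaf vertex in~$V_2$ may be incident to at most one chosen edge. Maximizing the number of edges in the packing is therefore the same as finding a maximum-cardinality such degree-constrained subgraph.

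First I would encode the degree bound on the centers by vertex splitting: replace each center~$u\in V_1$ by~$d+1$ copies~$u^{(1)},\dots,u^{(d+1)}$, each joined to exactly the original neighbors of~$u$ in~$V_2$; leave the leaf side~$V_2$ unchanged. A matching in this expanded bipartite graph~$\J{}'$ then corresponds one-to-one with an edge set in~$\J{}$ obeying the capacity~$d+1$ at centers and capacity~$1$ at leaves, with matching cardinality equal to packing cardinality. Hence a maximum matching in~$\J{}'$ yields a maximum-edge ${\leq}(d+1)$-star packing, and the stars are recovered by grouping, for each~$u$, the leaves matched to any copy of~$u$. The second step is the running-time bookkeeping: $\J{}'$ has~$O(n)$ vertices (since~$d$ is a constant, the~$(d+1)$-fold blow-up of~$V_1$ still gives~$O(n)$ vertices) and~$O((d+1)m)=O(m)$ edges, so Hopcroft--Karp on~$\J{}'$ runs in~$O(\sqrt{n}\cdot m)$ time. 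Building~$\J{}'$ and translating the matching back into a star packing are both linear-time post-processing steps, absorbed into the stated bound.

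I expect the only real subtlety to be confirming the one-to-one correspondence between \emph{maximum-edge} star packings and \emph{maximum-cardinality} matchings, rather than merely between arbitrary packings and matchings. The point is that the blow-up construction is exact: every feasible packing lifts to a matching of the same size and every matching projects to a feasible packing of the same size, so the two maxima coincide and an optimal matching projects to an optimal packing. One should also note that using~$d+1$ copies of each center, rather than introducing a single capacitated super-vertex, keeps the reduction within the bipartite-matching framework and avoids appealing to a general flow solver; this matters because the~$O(\sqrt{n}\cdot m)$ bound is precisely the Hopcroft--Karp bound, which would not follow from a generic maximum-flow running time. With the correspondence and the size accounting in hand, the lemma follows immediately.
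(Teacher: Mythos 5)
Your proof is correct and follows essentially the same route as the paper's: the paper likewise blows up each center in~$V_1$ to~$d+1$ matching slots (it keeps the original vertex and adds~$d$ copies, a cosmetic difference from your~$d+1$ replacement copies), computes a maximum matching in the expanded bipartite graph via Hopcroft--Karp in~$O(\sqrt{n}\cdot m)$ time, and translates it back to a maximum-edge ${\leq}(d+1)$-star packing in linear time. Your additional care in verifying the exact correspondence between maximum matchings and maximum-edge packings, and in noting why the construction stays within bipartite matching rather than general flow, is sound and only makes explicit what the paper leaves implicit.
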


The next lemma is also used for the correctness proof; in particular,
it guarantees the termination of the algorithm. 
\begin{mylemma}\label{lem:F_j}
  If the condition in line~\ref{a2:cond} of~\autoref{fig:bdd-compression} 
is false for a~$j\ge 0$, then~$F^X_j\subsetneq F^X_{j+1}$.
\end{mylemma}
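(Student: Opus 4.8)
The plan is to unwind the two definitions that the statement hinges on: in line~\ref{a2:setF} the next forbidden set is $F^X_{j+1}=X\setminus C$, while the failure of the test in line~\ref{a2:cond} reads $C\neq X\setminus F^X_j$. I would first establish the auxiliary fact that the set $C$ fixed in line~\ref{a2:findHe} always satisfies $C\subseteq X$; since complementation $S\mapsto X\setminus S$ is then an involution on subsets of $X$, the test $C=X\setminus F^X_j$ is logically equivalent to $X\setminus C=F^X_j$, i.e.\ to $F^X_{j+1}=F^X_j$. Hence proving the proper inclusion $F^X_j\subsetneq F^X_{j+1}$ splits into two independent tasks: (i) the inclusion $F^X_j\subseteq F^X_{j+1}=X\setminus C$, which is just $C\cap F^X_j=\emptyset$; and (ii) strictness, which by the reformulation is exactly the hypothesis that the test failed.

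For task (i), the crux is an invariant maintained by the inner repeat-loop (lines~\ref{a2:line7}--\ref{a2:until}): for every index $i$, $D_i\cap F^Y_j=\emptyset$ and $C_i\cap F^X_j=\emptyset$. First I would record the typing $D_i\subseteq Y$ and $C_i\subseteq X$, which follows by induction from $D_0\subseteq Y$ together with the bipartiteness of $\J{}$ and of the packing $P$ (neighborhoods flip sides). The base case $D_0\cap F^Y_j=\emptyset$ holds by the very definition $D_0=Y\setminus(F^Y_j\cup V(P))$ in line~\ref{a2:findHb}. For the inductive step I would invoke the definition $F^Y_j=N_G[N_\J{}(F^X_j)]\setminus X$ from line~\ref{a2:forb}, whose whole purpose is to absorb $N_\J{}(F^X_j)$: if some $u\in C_i=N_\J{}(D_i)$ were in $F^X_j$, then a witnessing neighbor $v\in D_i$ would satisfy $v\in N_\J{}(F^X_j)\subseteq F^Y_j$, contradicting $D_i\cap F^Y_j=\emptyset$; hence $C_i\cap F^X_j=\emptyset$. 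Passing from $C_i$ to $D_{i+1}=N_P(C_i)\cup D_i$ preserves disjointness from $F^Y_j$ because $P$ lives on $\J{}-(F^X_j\cup F^Y_j)$ (line~\ref{a2:pack}), so $N_P(C_i)\subseteq Y\setminus F^Y_j$. Taking $i$ to the terminating index gives $C\cap F^X_j=\emptyset$, which is (i).

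Task (ii) is then immediate: the hypothesis that line~\ref{a2:cond} fails is $C\neq X\setminus F^X_j$, equivalently $X\setminus C\neq F^X_j$, equivalently $F^X_{j+1}\neq F^X_j$; combined with (i) this yields $F^X_j\subsetneq F^X_{j+1}$. I expect the main obstacle to be the bookkeeping in task (i): one must be disciplined about which side of the bipartition each iterate lives on and invoke the two ``forbidden-set'' definitions (lines~\ref{a2:forb} and~\ref{a2:pack}) at exactly the right moment, since the entire reason for defining $F^Y_j$ through the \emph{closed} neighborhood $N_G[\cdot]$ is to make this induction close up. Everything else is a routine unwinding of the definitions in~\autoref{fig:bdd-compression}.
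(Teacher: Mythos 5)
Your proof is correct and takes essentially the same route as the paper's: you show that no vertex of~$F^X_j$ can enter~$C$ because~$N_\J{}(F^X_j)\subseteq F^Y_j$ is excluded both from~$D_0$ (line~\ref{a2:findHb}) and from~$V(P)$ (line~\ref{a2:pack}), which gives~$F^X_j\subseteq X\setminus C=F^X_{j+1}$, and properness is precisely the negation of the test in line~\ref{a2:cond}. The only difference is presentational---you formalize as an explicit induction on the inner-loop index~$i$ (the invariant~$D_i\cap F^Y_j=\emptyset$, $C_i\cap F^X_j=\emptyset$) what the paper states in compressed prose.
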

\begin{proof}
In lines~\ref{a2:forb} and~\ref{a2:pack} of~\autoref{fig:bdd-compression}, 
all vertices in~$F^X_j$ 
and their neighbors~$N_\J{}(F^X_j)$ are excluded from the star packing~$P$ in the 
$j$th iteration of the outer loop. Moreover, the vertices in~$N_\J{}(F^X_j)$ 
are excluded from the set~$D_0$
(line~\ref{a2:findHb}). Therefore, a vertex in~$F^X_j$ cannot be added
to~$C$ in line~\ref{a2:findHe}. Thus~$F^X_{j+1}$ (set to~$X \setminus C$ 
in line~\ref{a2:setF})
contains~$F^X_j$. Moreover, this containment is proper, as otherwise the
condition in line~\ref{a2:cond} would be true.
\end{proof}

\begin{mylemma}\label{lem:time}
Procedure \findextremal{} runs in~$O(n^{3/2} \cdot m + n^2)$ time. 
\end{mylemma}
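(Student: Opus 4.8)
The plan is to bound the running time of \findextremal{} by analyzing its two nested loops and charging the cost of each iteration to the star-packing computations, which dominate. The outer loop (lines~\ref{a2:oloop}--\ref{a2:setF}) is controlled by the strictly growing sets~$F^X_j$: by \autoref{lem:F_j}, whenever the condition in line~\ref{a2:cond} fails we have~$F^X_j \subsetneq F^X_{j+1}$, so~$|F^X_j|$ increases by at least one per iteration. Since~$F^X_j \subseteq X \subseteq V$, this immediately gives at most~$O(n)$ iterations of the outer loop before~$F^X_j = X$ forces termination of the \textbf{while}. First I would record this~$O(n)$ bound explicitly, since it is the backbone of the whole estimate.

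Next I would account for the per-iteration cost. The expensive operation in each outer iteration is the call to \textbf{star-packing} in line~\ref{a2:pack}, which by \autoref{lem:starpacking} costs~$O(\sqrt{n}\cdot m)$. The inner \textbf{repeat} loop (lines~\ref{a2:line7}--\ref{a2:until}) computes a fixed point of the alternating-neighborhood operator building up~$D_i$ and~$C_i$; since~$D_i$ grows monotonically and is bounded by~$|Y|=O(n)$, it runs for~$O(n)$ rounds, each costing~$O(m)$ time to compute the neighborhoods~$N_\J{}(D_i)$ and~$N_P(C_i)$, for a total of~$O(nm)$ per outer iteration. The auxiliary set computations in lines~\ref{a2:forb} and~\ref{a2:findHb} (neighborhoods in~$G$ and~$\J{}$) are~$O(m)$ each. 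Comparing the three contributions per outer iteration, the dominant term is~$\max\{O(\sqrt{n}\cdot m),\,O(nm)\}=O(nm)$.

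Multiplying the per-iteration cost by the~$O(n)$ outer iterations would naively give~$O(n^2 m)$, which is worse than the claimed~$O(n^{3/2}\cdot m + n^2)$; so the main obstacle is that the crude per-iteration bound for the inner loop is too loose and must be sharpened by an amortized argument. The key observation I expect to need is that the total work done by the inner \textbf{repeat} loops, summed across \emph{all} outer iterations, is bounded independently of the naive product: each inner loop essentially retraces augmenting-path structure in~$\J{}$, and the edges and vertices touched can be charged against the monotone growth of~$D_i$ or against the~$O(n)$ total size of the packing, yielding~$O(nm)$ or~$O(n^2)$ total rather than per iteration. Likewise, the~$O(n)$ star-packing calls contribute~$O(n\cdot\sqrt{n}\cdot m)=O(n^{3/2}\cdot m)$ in aggregate, which matches the first term of the claimed bound, while the neighborhood bookkeeping accumulates to the~$O(n^2)$ term.

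To close the argument I would assemble these pieces: the~$O(n)$ star-packing calls give~$O(n^{3/2}\cdot m)$, the amortized inner-loop and neighborhood costs give~$O(n^2)$, and the remaining line-by-line operations are subsumed. Summing yields the total~$O(n^{3/2}\cdot m + n^2)$ time asserted in \autoref{lem:time}. The delicate step, and the one deserving the most care, is the amortization of the inner \textbf{repeat} loop across outer iterations; I would make this rigorous by tracking that vertices moved into~$C$ or excluded via~$F^X_j$ never return, so that the cumulative reachability computations behave like a single global search rather than~$O(n)$ independent ones.
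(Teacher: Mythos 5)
Your outer skeleton matches the paper's proof: $O(n)$ outer iterations via \autoref{lem:F_j}, and $O(\sqrt{n}\cdot m)$ per call to \textbf{star-packing} (\autoref{lem:starpacking}), giving the $O(n^{3/2}\cdot m)$ term. But your handling of the inner \textbf{repeat} loop has a genuine gap. You first bound it by $O(n)$ rounds at $O(m)$ each (recomputing $N_\J{}(D_i)$ and $N_P(C_i)$ from scratch), arrive at $O(nm)$ per outer iteration, and then try to rescue the total by an amortization \emph{across} outer iterations, justified by the claim that ``vertices moved into~$C$ or excluded via~$F^X_j$ never return.'' That invariant is false for~$C$: the sets~$C_i$, $D_i$ are rebuilt from scratch in every outer iteration starting from a fresh~$D_0$ and a fresh packing~$P$; only the forbidden sets~$F^X_j$ grow monotonically (line~\ref{a2:setF}), while every vertex of~$X\setminus F^X_j$ --- in particular every vertex that ended up in~$C$ in the previous iteration, since~$F^X_{j+1}=X\setminus C$ --- is reprocessed. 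So the cumulative inner-loop work does not behave like a single global search, and your claimed aggregate bounds of $O(nm)$ or $O(n^2)$ for all inner loops together are unsupported.

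The missing idea is much simpler and entirely local to one outer iteration: because $D_0\subseteq D_1\subseteq\cdots$ and the $C_i$ grow monotonically as well, the fixed-point computation in lines~\ref{a2:line7}--\ref{a2:until} can be implemented incrementally, scanning only the vertices newly added to $D_i$ (resp.~$C_i$) and their incident edges in~$\J$ (resp.~$P$). Each vertex of~$\J$ is then considered at most once per execution of the inner loop, so a single inner loop costs $O(n+m)$ --- this is exactly the observation the paper's proof uses. With that, no cross-iteration amortization is needed: the total is $O\bigl(n+m+n(\sqrt{n}\cdot m+n+m)\bigr)=O(n^{3/2}\cdot m+n^2+nm)$, and the $nm$ term is absorbed since $nm\le n^{3/2}\cdot m$. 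Your final bound is correct, but as written the argument in the middle both overestimates the per-iteration cost and then repairs it with an unsound invariant.
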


\subsubsection{Correctness of \findextremal{}.} The correctness proof 
for \findextremal{} in~\autoref{fig:bdd-compression} 
is more involved than its running time analysis. 
The following lemmas provide some properties of~$(C,D)$ which are needed.
\begin{mylemma}\label{lem:IHprops}
  For each~$j\ge 0$ the following properties hold after the execution of 
line~\ref{a2:findHe} in~\autoref{fig:bdd-compression}:
  \begin{enumerate}[(1)]
    \setlength{\itemsep}{-\parsep}
    \item \label{IHprops:center} every vertex in~$C$ is a center vertex of a $(d+1)$-star in~$P$, and
    \item \label{IHprops:leaves} the leaves of every star in~$P$ with center in~$C$ are vertices in~$D$.
  \end{enumerate}
\end{mylemma}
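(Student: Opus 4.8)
The plan is to read the star packing $P$ computed in line~\ref{a2:pack} as a degree-constrained subgraph (a ``$b$-matching'') of the bipartite graph~$\J{}-(F^X_j\cup F^Y_j)$: a set of edges in which every vertex of~$X\setminus F^X_j$ is incident to at most~$d+1$ of them (exactly~$d+1$ when it is the centre of a $(d+1)$-star) and every vertex of~$Y\setminus F^Y_j$ to at most one (the edge joining a leaf to its centre). Since~$P$ is a \emph{maximum-edge} such packing, it is a maximum $b$-matching, and by the standard augmenting-path characterization there is no alternating path joining a \emph{deficient} $Y$-vertex (one carrying no packing edge) to a deficient $X$-vertex (one carrying fewer than~$d+1$ packing edges). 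The first thing I would observe is that the inner loop in lines~\ref{a2:line7}--\ref{a2:until} performs exactly the alternating-reachability search from the set~$D_0$ of unpacked, non-forbidden $Y$-vertices (line~\ref{a2:findHb}): from a $Y$-vertex one steps along a $\J{}$-edge (line~\ref{a2:findHc}, $C_i\leftarrow N_\J{}(D_i)$) and from an $X$-vertex along a packing edge ($D_{i+1}\leftarrow N_P(C_i)\cup D_i$). Note first that everything stays inside the allowed sets: because~$F^Y_j=N_G[N_\J{}(F^X_j)]\setminus X$ (line~\ref{a2:forb}), any non-forbidden $Y$-vertex has no neighbour in~$F^X_j$, so~$D_i\subseteq Y\setminus F^Y_j$ and~$C_i\subseteq X\setminus F^X_j$ throughout.

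Second, I would prove the reachability invariant that everything rests on: each~$c\in C$ is joined to some~$y\in D$ by an edge \emph{outside}~$P$, and there is an alternating path from a vertex of~$D_0$ to~$c$ whose last edge~$\{y,c\}$ is a non-packing edge. For the non-packing claim I would take the first iteration~$i_c$ at which~$c$ enters~$C$; then~$c$ is adjacent to a newly added vertex~$y\in D_{i_c}\setminus D_{i_c-1}\subseteq N_P(C_{i_c-1})$, so~$y$ is the leaf of some centre~$c''\in C_{i_c-1}$. Were~$\{c,y\}$ a packing edge, then~$y$ (of packing-degree at most one) would be a leaf of~$c$, forcing~$c=c''\in C_{i_c-1}$ and contradicting minimality of~$i_c$; hence~$\{c,y\}\notin P$. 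Iterating back through the centres~$c''$ (each reached along a packing edge from a leaf, which in turn was reached along a non-packing edge) assembles the alternating path, which by construction alternates non-$P$, $P$, non-$P$, \dots\ and ends with a non-$P$ edge at~$c$.

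Finally I would harvest the two properties. For property~(\ref{IHprops:center}), suppose some~$c\in C$ were not the centre of a $(d+1)$-star, i.e.\ deficient; then the alternating path above runs from the deficient $Y$-vertex of~$D_0$ to the deficient $X$-vertex~$c$, and its symmetric difference with~$P$ is again a feasible $b$-matching with one more edge---along a path of~$2t+1$ edges one adds~$t+1$ edges to~$P$ and removes~$t$, every internal vertex keeps its degree, and the two deficient endpoints stay within their capacities---contradicting maximality; hence every vertex of~$C$ is the centre of a $(d+1)$-star in~$P$. Property~(\ref{IHprops:leaves}) then follows from the stopping condition: the loop halts at a fixed point with~$D=D_i=D_{i-1}$ and~$C=C_i=N_\J{}(D_i)=C_{i-1}$, whence~$D=D_i=N_P(C_{i-1})\cup D_{i-1}=N_P(C)\cup D$, so~$N_P(C)\subseteq D$ and the leaves~$N_P(c)$ of each centre~$c\in C$ lie in~$D$. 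I expect the main obstacle to be the reachability invariant of the second step: one must argue that~$N_\J{}$ (which also contains the packing edge from a leaf back to its own centre) never admits a vertex into~$C$ except through a genuine non-packing edge, and that the parity of the assembled path is correct so that the augmenting-path exchange in the last step is legal.
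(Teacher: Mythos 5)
Your proof is correct and takes essentially the same route as the paper's own (sketched) argument: both establish property~(1) by extracting a $P$-augmenting path from an unpacked vertex in~$D_0$ to any deficient vertex of~$C$, constructed inductively by simulating the computation of the sets~$C_i$ and~$D_i$, and contradicting the maximum-edge property of the star packing, while property~(2) follows in both from the stopping condition~$D_i = D_{i-1}$ of the inner loop together with the definition of the packing. Your $b$-matching framing, the minimality-of-entry-iteration argument showing each newly used edge~$\{y,c\}$ lies outside~$P$ (which also makes the path simple), and the explicit degree check for the symmetric-difference exchange merely supply details that the paper's sketch leaves implicit.
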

\begin{proof}(Sketch)
To prove~(\ref{IHprops:center}), first of all, we show that~$v \in C$ implies~$v \in
        V(P)$, since, otherwise, we could get a $P$-augmenting
        path from some element in~$D_0$ to~$v$. A $P$-augmenting path is a path
        where the edges in~$E(P)$ and the edges not in~$E(P)$ alternate, 
        and the first and the last edge are not in~$E(P)$. 
        This $P$-augmenting path can be constructed in an inductive way by simulating 
        the construction of~$C_i$ in lines~\ref{a2:findHb} to~\ref{a2:until}
        of~\autoref{fig:bdd-compression}. 
        From this $P$-augmenting path, 
        we can then construct a $X$-center $\le(d+1)$-star packing that has 
        more edges than~$P$, contradicting
        that~$E(P)$ has maximum cardinality. Second, every vertex in~$C$ is a center
        of a star due to the definition of~$P$ and Procedure \textbf{star-packing}. Finally, if a
        vertex~$v \in C$ is the center of a star with less than~$(d+1)$
        leaves, then again we get a $P$-augmenting path from some
        element in~$D_0$ to~$v$.

The second statement follows easily from Procedure \textbf{star-packing} and the
pseudo-code in lines~\ref{a2:findHb} to~\ref{a2:findHe}. 
\end{proof}

\begin{mylemma}\label{lem:dist}
  For each~$j\ge 0$ there is no edge in~$G$ between~$D$ and~$N_\J{}(F^X_j)$.
\end{mylemma}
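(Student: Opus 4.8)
The plan is to reduce the statement to the single containment $D \cap F^Y_j = \emptyset$, reading the forbidden set $F^Y_j = N_G[N_\J(F^X_j)] \setminus X$ directly off its definition in line~\ref{a2:forb}. First I would record two easy structural facts. Since $\J$ is bipartite with parts $X$ and $Y$ and $F^X_j \subseteq X$, the set $N_\J(F^X_j)$ lies entirely in $Y$. Likewise $D \subseteq Y$: this follows by an immediate induction along the inner loop, because $D_0 \subseteq Y$ and each augmentation step sets $C_i = N_\J(D_i) \subseteq X$ (again by bipartiteness) and then adjoins $N_P(C_i)$, whose vertices are leaves of stars of $P$ and hence lie in $Y$.

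The heart of the argument is the claim that $D$ avoids $F^Y_j$, which I would prove by induction on the inner-loop index $i$. For the base case, $D_0 = Y \setminus (F^Y_j \cup V(P))$ by line~\ref{a2:findHb}, so $D_0 \cap F^Y_j = \emptyset$. For the inductive step, the only new vertices in $D_{i+1} = N_P(C_i) \cup D_i$ come from $N_P(C_i) \subseteq V(P)$; since the packing $P$ is computed on $\J - (F^X_j \cup F^Y_j)$ (line~\ref{a2:pack}), we have $V(P) \cap F^Y_j = \emptyset$, so these new vertices also avoid $F^Y_j$. Hence at termination (where $D = D_i$) we have $D \cap F^Y_j = \emptyset$.

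Finally I would combine these observations. Suppose for contradiction that some edge $\{u,v\} \in E(G)$ joins $v \in D$ and $u \in N_\J(F^X_j)$. Then $v \in N_G(u) \subseteq N_G[N_\J(F^X_j)]$, and since $v \in D \subseteq Y$ we have $v \notin X$, whence $v \in N_G[N_\J(F^X_j)] \setminus X = F^Y_j$, contradicting $D \cap F^Y_j = \emptyset$. The only genuinely load-bearing observation is that the definition of $F^Y_j$ as a \emph{closed} $G$-neighborhood of $N_\J(F^X_j)$ already absorbs every $Y$-vertex that could be $G$-adjacent to $N_\J(F^X_j)$; once that is noticed, the remaining work is the short induction showing that $D$ never re-enters $F^Y_j$, which is where I expect the only real care to be needed.
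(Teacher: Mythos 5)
Your proof is correct and follows essentially the same route as the paper's: the paper likewise reads off that $F^Y_j = N_G[N_\J(F^X_j)]\setminus X$ is excluded both from the computation of the packing~$P$ and from~$D_0$, so that $D$ stays disjoint from~$F^Y_j$, and the closed $G$-neighborhood in the definition of~$F^Y_j$ then rules out any $G$-edge between~$D$ and~$N_\J(F^X_j)$. Your write-up merely makes explicit the inner-loop induction (that $D_0$ avoids~$F^Y_j$ and that the added vertices $N_P(C_i)\subseteq V(P)$ are disjoint from~$F^Y_j$) which the paper leaves implicit.
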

\begin{proof}
The vertices in~$F^X_j$ and the vertices in~$N_G[N_\J{}(F^X_j)]\setminus X$
are excluded from the computation of~$P$ and are not contained in~$D_0$ 
(lines~\ref{a2:forb} to~\ref{a2:findHb} in~\autoref{fig:bdd-compression}). 
Thus, $N_\J{}[F^X_j]\cap D=\emptyset$ 
and therefore there are no edges in~$G$ between~$D$ and~$N_\J{}(F^X_j)$.
\end{proof}

The next lemma shows that the output of~\findextremal{} 
fulfills the local optimality conditions.

\begin{mylemma}\label{lem:correctness}
  Procedure \findextremal{} returns two disjoint vertex subsets
  fulfilling conditions~C\ref{comstr:D} and~C\ref{comstr:bddset}.
\end{mylemma}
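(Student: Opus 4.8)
The plan is to treat the two return points of \findextremal{} separately: the procedure returns either in line~\ref{a2:retker} (when the test in line~\ref{a2:cond} succeeds) or in line~\ref{a2:ret} (when the outer loop terminates with $F^X_j=X$). In either case I would first record the facts available whenever control reaches line~\ref{a2:findHe}. The inner \textbf{repeat} loop stops at a fixpoint, so $C=N_\J{}(D)$; by construction $C\subseteq X$ and $D\subseteq Y$ are disjoint; and since the witness $X$ is a \bdd{} of~$G$, the graph $G[Y]$---and hence $G[D]$ for any $D\subseteq Y$---has maximum degree at most~$d$.

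For condition~C\ref{comstr:bddset} I would argue that it holds at line~\ref{a2:findHe} in every iteration, regardless of the test in line~\ref{a2:cond}. By~\autoref{lem:IHprops}, each vertex of~$C$ is the center of a $(d+1)$-star of~$P$ whose $d+1$ leaves all lie in~$D$, and these stars are vertex-disjoint because $P$ is a packing. Hence $G[C\cup D]$ contains $|C|$ vertex-disjoint $(d+1)$-stars; as any \bdd{} must delete at least one vertex of each such star (otherwise its center keeps degree $\ge d+1$), every \bdd{} of $G[C\cup D]$ has size at least~$|C|$. Since deleting $C$ leaves $G[D]$, which has maximum degree at most~$d$, the set $C$ is itself a \bdd{} of $G[C\cup D]$, so it is of minimum cardinality and C\ref{comstr:bddset} holds.

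The return in line~\ref{a2:retker} occurs exactly when $C=X\setminus F^X_j$, i.e.\ $X\setminus C=F^X_j$; the task is then to establish C\ref{comstr:D}, that every $w\in N_G[D]\setminus C$ has degree at most~$d$ in $G-C$. For $w\in D$ this is immediate: $C=N_\J{}(D)$ forces all $X$-neighbors of~$w$ into~$C$, leaving only its at-most-$d$ neighbors inside~$Y$. For $w\in N_G(D)\setminus C$, the same identity $N_\J{}(D)=C$ shows $w\notin X$, so $w\in Y\setminus D$, and it remains to rule out a neighbor of~$w$ in $X\setminus C=F^X_j$. I expect this to be the crux, and it is precisely what~\autoref{lem:dist} supplies: a neighbor of~$w$ in $F^X_j$ would put $w\in N_\J{}(F^X_j)$, and together with $w$'s neighbor in~$D$ this yields an edge between~$D$ and $N_\J{}(F^X_j)$, contradicting~\autoref{lem:dist}. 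Thus all surviving neighbors of~$w$ lie in~$Y$, giving degree at most~$d$ in $G-C$.

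For the return in line~\ref{a2:ret} we have $C=\emptyset$ and, using $F^X_j=X$, the set $D=V\setminus(X\cup F^Y_j)=Y\setminus N_G[N_\J{}(X)]$ consists of the $Y$-vertices with no neighbor in~$X$ and no neighbor in $N_\J{}(X)$. Condition~C\ref{comstr:bddset} is trivial, as $\emptyset$ is a minimum \bdd{} of $G[D]$ because $G[D]$ has maximum degree at most~$d$. For C\ref{comstr:D} with $C=\emptyset$ I must show every vertex of $N_G[D]$ has degree at most~$d$ in~$G$: a vertex of~$D$ has all its neighbors in~$Y$ by definition, and any $w\in N_G(D)$ lies in~$Y$ (a vertex of $D$ adjacent to it has no $X$-neighbor) and can have no $X$-neighbor either, since otherwise $w\in N_\J{}(X)$ would make that vertex of~$D$ belong to $N_G(N_\J{}(X))$, contradicting the definition of~$D$. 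Hence every such vertex has degree at most~$d$ in~$G$, which completes the argument.
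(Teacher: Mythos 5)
Your proposal is correct and takes essentially the same route as the paper's own proof: condition~C\ref{comstr:bddset} via \autoref{lem:IHprops} together with the observation that~$G[D]$ has maximum degree at most~$d$ (since~$D\subseteq Y$ and~$X$ is a \bdd{} of~$G$), condition~C\ref{comstr:D} for the return in line~\ref{a2:retker} via the fixpoint property~$N_\J{}(D)\subseteq C$ combined with \autoref{lem:dist} applied to~$F^X_j=X\setminus C$, and the distance-at-least-$3$ argument for the return in line~\ref{a2:ret}. Your extra details---spelling out the packing lower bound ($|C|$ vertex-disjoint $(d+1)$-stars each forcing a deletion) and noting that C\ref{comstr:bddset} already holds at line~\ref{a2:findHe} in every iteration---are faithful elaborations of steps the paper leaves implicit.
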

\begin{proof}
Clearly, the output consists of two disjoints sets. 
The algorithm returns in lines~\ref{a2:retker} or~\ref{a2:ret} 
of~\autoref{fig:bdd-compression}. 
If it returns in line~\ref{a2:ret}, then the output~$C$ is empty and~$D$ 
contains only
vertices that have a distance at least~$3$ to the vertices in~$X$: 
The condition in line~\ref{a2:oloop} implies~$F^X_j = X$ 
and, therefore,~$F^Y_j$ contains all vertices in~$G \setminus X$ that 
have distance at most~$2$ to the vertices in~$X$. Since~$X$ is a \bdd{} of~$G$, 
all vertices in~$D$ and their neighbors in~$G$ have a degree
at most~$d$. This implies that both conditions hold for the output
returned in this line. It remains to consider the output returned in
line~\ref{a2:retker}.

To show that condition C\ref{comstr:D} holds, recall that~$G - X$  
has maximum degree~$d$ and that~$C \subseteq X$. Therefore, if for a
vertex~$v$ in~$V\setminus X$ we have~$N_\J{}(v) \subseteq C$, then~$v$
has degree at most~$d$ in~$G - C$. Thus, to show that
each vertex in~$N_G[D] \setminus C$ has degree at most~$d$ in~$G - C$, 
it suffices to prove that~$N_\J{}(N_G[D] \setminus C)\subseteq C$. We show 
separately that~$N_\J{}(D) \subseteq C$ and that~$N_\J{}(N_G(D)\setminus C)\subseteq C$.
        
The assignment in line~\ref{a2:findHc} and the until-condition in line~\ref{a2:until}
directly give~$N_\J{}(D) \subseteq C$. Due to \autoref{lem:dist} there
is no edge in~$G$ between~$D$ and~$N_\J{}(F^X_j)$, where~$F^X_j =
X \setminus C$ (the if-condition in line~\ref{a2:cond}, which has to be 
satisfied for the procedure to return in line~\ref{a2:retker}). From this
it follows that the vertices in~$N_G(D) \setminus C$
have no vertex in~$F^X_j$ as neighbor and, thus,~$N_\J{}(N_G(D)\setminus C)\cap 
F^X_j = \emptyset$. Therefore,~$N_\J{}(N_G(D)\setminus C) \subseteq C$.

By Properties~\ref{IHprops:center} and~\ref{IHprops:leaves} of \autoref{lem:IHprops}, 
there are exactly~$|C|$ many vertex-disjoint $(d+1)$-stars in~$G[C\cup D]$. 
Moreover, there is no $(d+1)$-star in~$G[D]$, since~$X$ is a \bdd{} of~$G$. 
Thus, $C$ is a minimum-cardinality \bdd{} of~$G[C\cup D]$. 
\end{proof}


\subsubsection{Running Time and Correctness of \computeCD{}}
To prove the running time and correctness of \computeCD{}, we have to show 
that the output of \findextremal{} contains sufficiently
many vertices of~$Y$. 
To this end, the following lemma plays a decisive role. 
\begin{mylemma}\label{lem:Fsize}
  For all~$j\ge 0$, the set~$F^Y_j$ in line~\ref{a2:forb} and line~\ref{a2:setyy} 
  of~\autoref{fig:bdd-compression}
  has size at most~$(d + 1)^2 \cdot |F^X_j|$.
\end{mylemma}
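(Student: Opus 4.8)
The plan is to split the two-step definition $F^Y_j = N_G[N_\J{}(F^X_j)] \setminus X$ into two neighborhood-expansion steps, each contributing a factor of~$d+1$. First I would establish the inner bound $|N_\J{}(F^X_j)| \le (d+1)\,|F^X_j|$, and then bound the outer $G$-closure within~$Y$ by a further factor $1+d$, using that~$X$ is a~$\bdd{}$ and hence $G-X=G[Y]$ has maximum degree~$d$: every vertex of $N_\J{}(F^X_j)\subseteq Y$ has at most~$d$ additional neighbors inside~$Y$, so $|N_G[N_\J{}(F^X_j)]\cap Y| \le (1+d)\,|N_\J{}(F^X_j)|$. Since $N_\J{}(F^X_j)\subseteq Y$ gives $F^Y_j = N_G[N_\J{}(F^X_j)]\cap Y$, combining the two steps yields $|F^Y_j|\le (1+d)(d+1)\,|F^X_j| = (d+1)^2\,|F^X_j|$, which is exactly the claimed bound.

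The heart of the argument is therefore the inner bound $|N_\J{}(F^X_j)|\le (d+1)\,|F^X_j|$, and here I would exploit the star-packing structure together with the alternating-reachability characterization of~$C$. Recall that $F^X_j = X\setminus C$, that $C=N_\J{}(D)$ is the set of $X$-vertices reachable by $P$-alternating paths from the free residual vertices~$D_0$, and that by~\autoref{lem:IHprops} the reachable vertices are precisely the centers of full $(d+1)$-stars of~$P$ whose leaves lie in~$D$. I would show that every vertex $w\in N_\J{}(F^X_j)$ that is not itself already forbidden is a leaf of some~$\le(d+1)$-star of~$P$ whose center lies in~$F^X_j$. Indeed, $w$ cannot be free, for otherwise $w\in D$ and its $\J{}$-neighbor in~$F^X_j$ would lie in $N_\J{}(D)=C$, contradicting $F^X_j\cap C=\emptyset$; and if $w$ were a leaf of a reachable center then again $w\in D$, giving the same contradiction. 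Hence $w$ is a packed leaf of a non-reachable center, which (being non-forbidden, since forbidden $X$-vertices are excluded from~$P$) lies in $X\setminus C=F^X_j$. As each of the at most~$|F^X_j|$ such centers carries at most~$d+1$ leaves, the non-forbidden part of $N_\J{}(F^X_j)$ has size at most~$(d+1)\,|F^X_j|$.

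To account for the vertices of $N_\J{}(F^X_j)$ that are already forbidden, I would argue by induction on~$j$, invoking~\autoref{lem:F_j} so that $F^X_{j-1}\subsetneq F^X_j$, together with the monotonicity $F^Y_{j-1}\subseteq F^Y_j$ that follows from $N_\J{}(F^X_{j-1})\subseteq N_\J{}(F^X_j)$. The base case $F^X_0=\emptyset$ is immediate. For the inductive step I would write $F^Y_j = F^Y_{j-1}\cup\bigl(N_G[N_\J{}(\Delta^X_j)]\setminus X\bigr)$ with $\Delta^X_j := F^X_j\setminus F^X_{j-1}$, bound the first term by the induction hypothesis, and bound the increment contributed by the newly forbidden centers~$\Delta^X_j$ by $(d+1)^2\,|\Delta^X_j|$ via the leaf-counting of the previous paragraph followed by the $G[Y]$-degree bound; the increments then telescope to $(d+1)^2\,|F^X_j|$.

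I expect the main obstacle to be precisely this cross-iteration bookkeeping: one must verify that every $Y$-vertex counted in~$F^Y_j$ is charged to a center in~$F^X_j$ without double counting, and in particular that the forbidden $Y$-vertices lying in $N_\J{}(\Delta^X_j)$ are already absorbed into~$F^Y_{j-1}$ rather than spawning uncharged new neighbors under the outer $G$-closure. This is where~\autoref{lem:dist} (no $G$-edge between~$D$ and $N_\J{}(F^X_j)$) and the fact that~$F^Y_j$ is recomputed from~$F^X_j$ alone in lines~\ref{a2:forb} and~\ref{a2:setyy} are essential, since they confine the expansion to the region already controlled by the induction hypothesis and the leaf bound.
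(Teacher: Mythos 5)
Your proposal is correct and takes essentially the same route as the paper's own proof: induction on~$j$ with the telescoping decomposition $F^Y_j = F^Y_{j-1}\cup\bigl(N_G[N_\J{}(\Delta^X_j)]\setminus X\bigr)$, the argument that every not-yet-forbidden $\J{}$-neighbor of a newly forbidden center must lie in~$V(P)$ (otherwise it would be in~$D_0$ and pull that center into~$C$ via the alternating closure) and hence, by \autoref{lem:IHprops}, be a leaf of a packed star with center in~$\Delta^X_j$, followed by the $(d+1)$ leaves-per-center count and the additional factor~$d+1$ from the degree-$d$ bound in~$G-X$. The one cosmetic deviation is your appeal to \autoref{lem:dist} for the cross-iteration bookkeeping: the paper does not use that lemma here, but instead handles the already-forbidden neighbors directly by writing the increment as $N_{G-X}[N_{\J{}-F^Y_{j-1}}(\Delta^X_j)]$, i.e., restricting to neighbors outside~$F^Y_{j-1}$, so the substance of your argument coincides with the paper's.
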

\begin{proof}
  The proof is by induction on~$j$. The claim trivially holds for~$j=0$, 
  since~$F^Y_0 = \emptyset$. Assume that the claim is true for~$j > 0$.
  Since~$F^X_{j} \subsetneq F^X_{j+1}$ (\autoref{lem:F_j}), we have
  \[F^Y_{j+1} = F^Y_j \cup N_{G-X}[N_{\J{} - F^Y_j}(F^X_{j+1}\setminus F^X_j)]. \]
  We first bound the size of~$N_{\J{} - F^Y_j}(F^X_{j+1} \setminus F^X_j)$.  
  Since~$F^X_{j+1}$ was set to~$X \setminus C$ at the end of the $j$th iteration 
  of the outer loop (line~\ref{a2:setF}), the vertices in~$N_{\J{} - F^Y_j}(F^X_{j+1} \setminus F^X_j)$ were
  not excluded from computing the packing~$P$ (line~\ref{a2:pack}) of the $j$th iteration. 
  Moreover,~$N_{\J{}-F^Y_j}(F^X_{j+1} \setminus F^X_j) \subseteq
  V(P)$ for the star packing~$P$ computed in the $j$th iteration, since, 
  otherwise, the set~$D_0$ in line~\ref{a2:findHb} would contain a vertex~$v$
  in~$N_{\J{}-F^Y_j}(F^X_{j+1} \setminus F^X_j)$ and, then, line~\ref{a2:findHc}
  would include~$N_\J{}(v)$ into~$C$, which would contradict 
  the fact that~$C \cap F^X_{j+1} = \emptyset$ (line~\ref{a2:setF}). Due to
  property~\ref{IHprops:leaves} in \autoref{lem:IHprops} the leaves of
  every star in~$P$ with center in~$C$ are vertices in~$D$ and, thus, the vertices
  in~$N_{\J{}- F^Y_j}(F^X_{j+1} \setminus F^X_j)$ are leaves
  of stars in~$P$ with centers in~$F^X_{j+1} \setminus F^X_j$. Since each
  star has at most~$(d+1)$ leaves, the set~$N_{\J{}-F^Y_j}(F^X_{j+1}
  \setminus F^X_j)$ has size at most~$(d+1) \cdot |{F^X_{j+1} \setminus F^X_j}|$.
  The remaining part is easy to bound: since all the vertices in~$V\setminus X$ have 
  degree at most~$d$, we get
  \begin{align*}
    \left|N_{G-X}[N_{\J{}-F^Y_j}
     (F^X_{j+1} \setminus F^X_j)]\right| & \leq (d \cdot (d+1) + (d+1)) 
                                 \cdot |{F^X_{j+1} \setminus F^X_j}|\\ 
                               & = (d+1)^2 \cdot |{F^X_{j+1} \setminus F^X_j}|.
  \end{align*}
  With the induction hypothesis, we get that
 \begin{align*}
    |F^Y_{j+1}|
       & \leq  |F^Y_j| + |N_{G-X}[N_{\J{}-F^Y_j}(F^X_{j+1} \setminus
            F^X_j)]|\\
       & =  (d+1)^2  \cdot |F^X_j| + (d+1)^2 \cdot |{F^X_{j+1} \setminus F^X_j}|
        =  (d+1)^2  \cdot |F^X_{j+1}|.
  \end{align*}
\end{proof}

\begin{mylemma}\label{lem:sizebound}
  Procedure \findextremal{} always finds two sets~$C$ and~$D$ such
  that~$|Y \setminus D| \leq (d+1)^2 \cdot |X \setminus C|$.
\end{mylemma}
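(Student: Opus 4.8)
The plan is to split the argument according to the two points at which \findextremal{} can return, namely line~\ref{a2:retker} and line~\ref{a2:ret}, and in each case to reduce the claimed inequality $|Y\setminus D|\le (d+1)^2\cdot|X\setminus C|$ to the bound $|F^Y_j|\le (d+1)^2\cdot|F^X_j|$ already established in \autoref{lem:Fsize}. The uniform strategy is to show that, at the moment of returning, one always has $X\setminus C=F^X_j$ and $Y\setminus D\subseteq F^Y_j$; combining these two facts with \autoref{lem:Fsize} immediately yields the lemma.

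First I would dispose of the return in line~\ref{a2:ret}. Here the while-loop has terminated, so its condition (line~\ref{a2:oloop}) gives $F^X_j=X$, and the procedure outputs $C=\emptyset$ together with $D=V\setminus(X\cup F^Y_j)$. Since $F^Y_j=N_G[N_\J{}(F^X_j)]\setminus X\subseteq Y$, the set $D$ equals $Y\setminus F^Y_j$, whence $Y\setminus D=F^Y_j$ and $X\setminus C=X=F^X_j$. The desired bound is then exactly \autoref{lem:Fsize}.

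The main case is the return in line~\ref{a2:retker}, which is guarded by the if-condition $C=X\setminus F^X_j$ in line~\ref{a2:cond}; this immediately gives $X\setminus C=F^X_j$, so it remains to prove $Y\setminus D\subseteq F^Y_j$. I would first partition the residual as $Y=F^Y_j\,\sqcup\,(V(P)\cap Y)\,\sqcup\,D_0$, which follows from the definition $D_0=Y\setminus(F^Y_j\cup V(P))$ in line~\ref{a2:findHb} together with the fact that the packing $P$ is computed inside $\J{}-(F^X_j\cup F^Y_j)$, so its leaves avoid $F^Y_j$. Since $D_0\subseteq D$ holds throughout the construction of $D$ in lines~\ref{a2:line7} to~\ref{a2:findHe}, it suffices to show that every leaf of $P$ lies in $D$, that is, $V(P)\cap Y\subseteq D$.

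Here is the crux, and the step I expect to be the main obstacle. Because $C=X\setminus F^X_j$, the set of admissible star centers $X\setminus F^X_j$ coincides with $C$, so every center used by $P$ lies in $C$. By property~\ref{IHprops:center} of \autoref{lem:IHprops} each vertex of $C$ is in fact the center of a $(d+1)$-star of $P$, and by property~\ref{IHprops:leaves} all leaves of a star whose center is in $C$ belong to $D$. Consequently every leaf of $P$ belongs to $D$, which establishes $V(P)\cap Y\subseteq D$ and hence $Y\setminus D\subseteq F^Y_j$. A final application of \autoref{lem:Fsize} then gives $|Y\setminus D|\le|F^Y_j|\le(d+1)^2\cdot|F^X_j|=(d+1)^2\cdot|X\setminus C|$, completing the proof. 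The delicate point is precisely to argue that the guard $C=X\setminus F^X_j$ forces all of $P$'s centers into $C$ and thereby pins down its leaves via \autoref{lem:IHprops}; once this is in place the residual bookkeeping is routine.
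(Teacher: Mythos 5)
Your proof is correct and follows essentially the same route as the paper: both reduce the bound to \autoref{lem:Fsize} by showing that at either return point one has $X \setminus C = F^X_j$ and $Y \setminus D \subseteq F^Y_j$. The paper compresses this identification into the single assertion that the vertices remaining after deleting $C \cup D$ are exactly $F^X_j \cup F^Y_j$, whereas you justify it explicitly via the partition $Y = F^Y_j \sqcup (V(P)\cap Y) \sqcup D_0$ and property~(2) of \autoref{lem:IHprops} under the guard $C = X\setminus F^X_j$ --- a worthwhile elaboration of a step the paper leaves implicit, but the same argument.
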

\begin{proof}
  If \findextremal{} terminates, then~$V'=F^X_j\cup
  F^Y_j$ for the graph~$G'=(V',E')$ resulting by removing~$C\cup
  D$ from~$G$.  Since~$C \subseteq X$ and~$D \subseteq Y$, we
  have~$X \setminus C = F^X_j$ and~$Y \setminus D = F^Y_j$, and by
  \autoref{lem:Fsize} it follows immediately that~$|Y \setminus D| \leq
  (d+1)^2 \cdot |X \setminus C|$.
\end{proof}
Therefore, if~$|Y| > (d+1)^2 \cdot |X|$,
then \findextremal{} always returns two sets~$C$ and~$D$ such that~$D$
is not empty. 

\begin{mylemma}\label{lem:computecd-time}
Algorithm~\computeCD{} runs in~$O(n^{5/2} \cdot m + n^3)$ time. 
\end{mylemma}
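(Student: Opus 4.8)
The plan is to bound the running time as a product of two quantities: the number of iterations of the main loop (lines~\ref{a2:compXY}--\ref{a2:updateG}) and the worst-case cost of a single iteration. For the per-iteration cost, the dominant term is the call to \findextremal{} in line~\ref{a2:find_ex}, which by \autoref{lem:time} takes $O(n^{3/2}\cdot m + n^2)$ time. I would then check that the remaining work fits within this budget: computing the witness~$X$ and the residual~$Y$ in line~\ref{a2:compXY} consists of a greedy maximal $(d+1)$-star packing followed by a greedy trimming of vertices, both of which run in time bounded by $O(n\cdot m)$, and the updates of~$G$,~$A$,~$B$ in line~\ref{a2:updateG} are linear in the size of the current graph. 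Since vertices are only ever deleted, the vertex and edge counts in every iteration are bounded by the original values~$n$ and~$m$, so $O(n^{3/2}\cdot m + n^2)$ is a valid per-iteration bound throughout, and the preprocessing and bookkeeping are safely subsumed by it.

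The key step is to show that the loop iterates at most~$n$ times. First I would observe that whenever the algorithm reaches line~\ref{a2:find_ex} instead of returning at line~\ref{a2:test}, the current graph satisfies~$|Y| > (d+1)^2\cdot|X|$. By \autoref{lem:sizebound}, \findextremal{} returns sets~$C$ and~$D$ with~$|Y\setminus D| \le (d+1)^2\cdot|X\setminus C| \le (d+1)^2\cdot|X|$; combining this with~$|Y| > (d+1)^2\cdot|X|$ forces~$D\neq\emptyset$. Consequently, in line~\ref{a2:updateG} at least one vertex (any vertex of~$D$) is removed from~$G$ in every non-terminating iteration. Because the graph initially has~$n$ vertices and no vertex is ever re-added, the loop can execute at most~$n$ times before either terminating at line~\ref{a2:test} or exhausting the vertex set.

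Multiplying the two bounds then yields a total running time of~$O(n)\cdot O(n^{3/2}\cdot m + n^2) = O(n^{5/2}\cdot m + n^3)$, as claimed. I do not expect any genuine obstacle: all of the hard content has already been isolated in \autoref{lem:time} (per-call time) and \autoref{lem:sizebound} (which guarantees a strictly shrinking graph via~$D\neq\emptyset$). The only points demanding a moment's care are confirming that the preprocessing in line~\ref{a2:compXY} is truly dominated by the cost of \findextremal{}, and that using the original~$n$ and~$m$ as uniform upper bounds across all iterations is legitimate---which it is, precisely because the graph is monotonically shrinking.
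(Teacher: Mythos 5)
Your proposal is correct and follows essentially the same route as the paper: bound each iteration by the cost of \findextremal{} from \autoref{lem:time}, use \autoref{lem:sizebound} together with the loop condition~$|Y| > (d+1)^2\cdot|X|$ to conclude~$D \neq \emptyset$ and hence at most~$n$ iterations, and multiply. Your explicit derivation of~$D \neq \emptyset$ is exactly the remark the paper states after \autoref{lem:sizebound}, and your conservative~$O(n \cdot m)$ bound for the greedy witness computation (the paper claims~$O(n+m)$) is harmless since it is dominated by the per-iteration cost of \findextremal{}.
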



\begin{mylemma}\label{lem:sizecondition}
  The sets~$A$ and~$B$ computed by \computeCD{} fulfill the three properties
  given in \autoref{thm:bddkernelmain}.
\end{mylemma}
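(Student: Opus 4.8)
\noindent The plan is to prove all three properties by tracking the sets across the iterations of \computeCD{}. Write $G_0 := G$ and, for $i \geq 1$, let $(C_i, D_i)$ be the pair returned by the $i$th call to \findextremal{}, let $G_i := G_{i-1} - (C_i \cup D_i)$ be the graph after the $i$th update in line~\ref{a2:updateG}, and set $A_i := C_1 \cup \cdots \cup C_i$ and $B_i := D_1 \cup \cdots \cup D_i$, so that $G_i = G - (A_i \cup B_i)$ and the algorithm's output sets are $A = A_t$, $B = B_t$ for the total number $t$ of iterations. By \autoref{lem:correctness} each $(C_i, D_i)$ satisfies conditions~C\ref{comstr:D} and~C\ref{comstr:bddset} for $G_{i-1}$, so \autoref{prop:commitmentreduction} applies to $(C_i, D_i)$ with respect to $G_{i-1}$. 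The core of the argument is to show that the two local optimality conditions are \emph{transitive}: if they hold for $(A_{i-1}, B_{i-1})$ with respect to $G$ and for $(C_i, D_i)$ with respect to $G_{i-1}$, then they hold for $(A_i, B_i)$ with respect to $G$. Properties~1 and~2 of \autoref{thm:bddkernelmain} then follow by induction on $i$ (the base case $i = 0$ being trivial since $A_0 = B_0 = \emptyset$), while property~3 is read off from the termination condition.

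For the transitivity of property~1 I would argue by downward composition: given a \bdd{} $S'$ of $G_i$, \autoref{prop:commitmentreduction}(1) for $(C_i, D_i)$ yields that $S' \cup C_i$ is a \bdd{} of $G_{i-1}$, and the induction hypothesis for $(A_{i-1}, B_{i-1})$ turns $(S' \cup C_i) \cup A_{i-1} = S' \cup A_i$ into a \bdd{} of $G$. For property~2 I first record an auxiliary fact (writing $\mathrm{opt}(H)$ for the minimum size of a \bdd{} of $H$): whenever a pair $(A', B')$ satisfies both local optimality conditions for $G$, we have $\mathrm{opt}(G) = |A'| + \mathrm{opt}(G - (A' \cup B'))$. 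The inequality ``$\leq$'' is immediate from property~1 applied to an optimal \bdd{} of $G - (A' \cup B')$, while ``$\geq$'' follows by taking the optimal \bdd{} $S \supseteq A'$ guaranteed by property~2 and observing that $S \setminus (A' \cup B')$ is a \bdd{} of $G - (A' \cup B')$ of size at most $|S| - |A'|$ (deleting $B'$ from the already degree-bounded graph $G - S$ cannot raise any degree). With this, the inductive step for property~2 is: take the optimal \bdd{} $\hat S$ of $G_{i-1}$ with $C_i \subseteq \hat S$ from \autoref{prop:commitmentreduction}(2); by the induction hypothesis (property~1) $A_{i-1} \cup \hat S$ is a \bdd{} of $G$, it contains $A_{i-1} \cup C_i = A_i$, and its size $|A_{i-1}| + |\hat S| = |A_{i-1}| + \mathrm{opt}(G_{i-1})$ equals $\mathrm{opt}(G)$ by the auxiliary fact, so it is an optimal \bdd{} of $G$ containing $A_i$.

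For property~3 I would use that the algorithm returns in line~\ref{a2:test} exactly when the final witness $X$ and residual $Y$ of the current graph $G - (A \cup B)$ satisfy $|Y| \leq (d+1)^2 |X|$ (termination holds because, by \autoref{lem:sizebound} and the remark following it, every earlier iteration removes a nonempty $D$, so the graph strictly shrinks; cf.\ \autoref{lem:computecd-time}). By construction $X$ is a \bdd{} of $G - (A \cup B)$ obtained from a maximal $(d+1)$-star packing with, say, $p$ stars; since the $p$ vertex-disjoint $(d+1)$-stars must each be hit by any \bdd{}, $\mathrm{opt}(G - (A \cup B)) \geq p$, while $|X| \leq (d+2)p$, giving $\mathrm{opt}(G - (A \cup B)) \geq |X|/(d+2)$. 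Combining with $|V \setminus (A \cup B)| = |X| + |Y| \leq (1 + (d+1)^2)|X| = (d^2 + 2d + 2)|X|$ yields
\[
  \mathrm{opt}(G - (A \cup B)) \;\geq\; \frac{|X|}{d+2} \;\geq\; \frac{|V \setminus (A \cup B)|}{(d+2)(d^2 + 2d + 2)} \;=\; \frac{|V \setminus (A \cup B)|}{d^3 + 4d^2 + 6d + 4},
\]
and since every \bdd{} of $G - (A \cup B)$ has size at least $\mathrm{opt}(G - (A \cup B))$, property~3 follows.

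I expect the main obstacle to be the transitivity of property~2, namely establishing the additivity $\mathrm{opt}(G) = |A'| + \mathrm{opt}(G - (A' \cup B'))$ and in particular its ``$\geq$'' direction, which is what allows the locally committed sets $C_i$ to accumulate into a single optimal solution of the original graph; properties~1 and~3 are comparatively routine given the earlier lemmas.
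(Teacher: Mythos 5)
Your proposal is correct, but it reaches properties~1 and~2 by a genuinely different route than the paper. The paper's own proof is much terser: it asserts that the \emph{aggregated} pair~$(A,B)$ itself satisfies conditions~C\ref{comstr:D} and~C\ref{comstr:bddset} with respect to the original graph~$G$ (because every pair~$(C,D)$ returned by \findextremal{} satisfies them in the respective current graph, by \autoref{lem:correctness}), and then invokes \autoref{prop:commitmentreduction} a single time on~$(A,B)$. That aggregation step is left implicit in the paper and does require an argument: for C\ref{comstr:D} one would take the minimal iteration~$j$ with~$v \in N_G[D_j]$, observe that minimality forbids~$v$ from having neighbors in~$D_1 \cup \dots \cup D_{j-1}$, whence~$\deg_{G-A}(v) \leq \deg_{G_{j-1}-C_j}(v) \leq d$; for C\ref{comstr:bddset} one stitches together the~$|C_j|$ vertex-disjoint $(d+1)$-stars guaranteed in each iteration (\autoref{lem:IHprops}) into~$|A|$ disjoint stars inside~$G[A \cup B]$. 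You never claim C\ref{comstr:D}/C\ref{comstr:bddset} for the union; instead you compose the two \emph{local optimality conditions} themselves across iterations, with the additivity identity~$\mathrm{opt}(G) = |A'| + \mathrm{opt}(G - (A' \cup B'))$ as your key auxiliary lemma (whose ``$\geq$'' direction you correctly derive from heredity of the bounded-degree property), which is what lets the locally committed sets~$C_i$ accumulate into a single optimal solution. Your route is slightly longer but arguably more robust: it uses \autoref{prop:commitmentreduction} purely as a black box and would apply to any iterated reduction whose individual steps satisfy NT-style properties, whereas the paper's shortcut depends on the specific conditions being closed under this kind of union. Your argument for property~3 --- termination via \autoref{lem:sizebound}, the star-packing lower bound~$\mathrm{opt}(G - (A \cup B)) \geq |X|/(d+2)$, and~$|Y| \leq (d+1)^2 \cdot |X|$ at the final test, giving the denominator~$(d+2)((d+1)^2+1) = d^3+4d^2+6d+4$ --- coincides with the paper's.
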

\begin{proof}
Since every~$(C,D)$ output by \findextremal{} in line~\ref{a2:find_ex} of 
\computeCD{} in \autoref{fig:computeCD} fulfills conditions~C\ref{comstr:D}
and~C\ref{comstr:bddset}
(\autoref{lem:correctness}),
the pair~$(A,B)$ output in line~\ref{a2:test} of \computeCD{} fulfills
conditions~C\ref{comstr:D}
and~C\ref{comstr:bddset}, and, therefore, also
the local optimality conditions (\autoref{prop:commitmentreduction}).
It remains to show that~$(A,B)$
fulfills the size condition.

  Let~$X$ and~$Y$ be the last computed witness and residual, respectively.
  Since the condition in line~\ref{a2:test} is true, we know that~$|Y|
  \leq (d+1)^2 \cdot |X|$.  Recall that~$X$ is a factor-$(d+2)$
  approximate bdd-$d$-set for~$G' := G - (A \cup B)$. Thus, every \bdd{}
  of~$G'$ has size at least~$|X|/(d+2)$. Since the output sets~$A$ and~$B$
  fulfill the local optimality conditions and the bounded-degree property
  is hereditary, every \bdd{} of~$G'$ has size at least
\begin{align*}
\frac{|X|}{d+2} \stackrel{(*)}{\ge}  \frac{|V'|}{(d+2)((d+1)^2+1)}
 = \frac{|V'|}{(d^3+4d^2+6d+4)}.
\end{align*}
The inequality~(*) follows from the fact that~$Y$ is small, that is,~$|Y|
\leq (d+1)^2 \cdot |X|$ (note that~$V' = X \cup Y$).
\end{proof}
With Lemmas~\ref{lem:computecd-time} and~\ref{lem:sizecondition}, the proof of \autoref{thm:bddkernelmain} is completed.
\section{Conclusion}
Our main result is to generalize the Nemhauser-Trotter-Theorem, which
applies to the \BDD{} problem with $d=0$ (that is, \textsc{Vertex Cover}),
to the general case with arbitrary $d\geq 0$. In particular, 
in this way we contribute problem kernels with a number of vertices
linear in the solution size~$k$ for all constant 
values of~$d$ for \BDD{}.
To this end, we developed a new algorithmic strategy that is based on
extremal combinatorial arguments. The original NT-Theorem~\cite{NT75}
has been proven using linear programming relaxations---we see no way how
this could have been generalized to \BDD{}. By way of contrast, we
presented a purely combinatorial data reduction algorithm which is also
completely different from known combinatorial data reduction algorithms
for \textsc{Vertex Cover} (see~\cite{AFLS07,BE85,CC08}). 
Finally, Baldwin et al.~\cite[page~175]{BCK+05} remarked that, with
respect to 
practical applicability in the case of \textsc{Vertex Cover} kernelization, 
combinatorial data reduction algorithms are more powerful than ``slower
methods that rely on linear programming relaxation''.
Hence, we expect that benefits similar to those derived from 
\textsc{Vertex Cover} kernelization for biological network analysis 
(see the motivation part of our introductory discussion) may be provided by  
\BDD{} kernelization.
\bibliographystyle{abbrv}
\bibliography{fellows}

\end{document}